\ifpdf \usepackage[pdftex]{graphicx} \pdfcompresslevel=9
\else \usepackage[dvips]{graphicx} \fi
\newtheorem{lemma}{Lemma}[section]
\newtheorem{proposition}{Proposition}[section]
\newtheorem{definition}{Definition}[section]
\definecolor{tabblue}{rgb}{0.54, 0.81, 0.94}
\definecolor{mygreen}{RGB}{28,172,0} 
\definecolor{mylilas}{RGB}{170,55,241}
\definecolor{myblue}{RGB}{203,192,211}
\definecolor{mycolor1}{RGB}{255,204,201}
\newlength{\Oldarrayrulewidth}
\newcommand{\uargmin}[1]{\underset{#1}{\text{argmin}}\;}
\newcommand{\uargmax}[1]{\underset{#1}{\text{argmax}}\;}
\def\*#1{\mathbf{#1}}
\newcommand{\inpar}[1]{\left(#1\right)}
\newcommand{\inparsmall}[1]{\big(#1\big)}
\newcommand{\eps}{\varepsilon}
\newcommand{\al}{\alpha}
\newcommand{\RR}{\mathbb{R}}
\newcommand{\Ff}{\mathcal{F}}
\newcommand{\Ii}{\mathcal{I}}
\newcommand{\Mm}{\mathcal{M}}
\newcommand{\Nn}{\mathcal{N}}
\newcommand{\Ss}{\mathcal{S}}
\newcommand{\lb}{\lambda}
\definecolor{C1}{HTML}{EC3008}
\title{Scalable and Efficient Functional Map\\ Computations on Dense Meshes}
\author[Robin Magnet \& Maks Ovsjanikov]
{\parbox{\textwidth}{\centering Robin Magnet$^{1}$\orcid{0000-0002-2192-411X}
        and Maks Ovsjanikov$^{1}$\orcid{0000-0002-5867-4046}
}
        \\
{\parbox{\textwidth}{\centering $^1$LIX, Ecole Polytechnique, IP Paris
       }
}
}
\begin{document}


\maketitle
\begin{abstract}
   We propose a new scalable version of the functional map pipeline that allows to efficiently compute correspondences between potentially very dense meshes. Unlike existing approaches that process  dense meshes by relying on ad-hoc mesh simplification, we establish an integrated end-to-end pipeline with theoretical approximation analysis. In particular, our method overcomes the computational burden of both computing the basis, as well the functional  and pointwise correspondence computation by approximating the functional spaces and the functional map itself. Errors in the approximations are controlled by theoretical upper bounds assessing the range of applicability of our pipeline. With this construction in hand, we propose a scalable practical algorithm and demonstrate results on dense meshes, which approximate those obtained by standard functional map algorithms at the fraction of the computation time. Moreover, our approach outperforms the standard acceleration procedures by a large margin, leading to accurate results even in challenging cases.
\begin{CCSXML}
<ccs2012>
<concept>
<concept_id>10010147.10010371.10010352.10010381</concept_id>
<concept_desc>Computing methodologies~Collision detection</concept_desc>
<concept_significance>300</concept_significance>
</concept>
<concept>
<concept_id>10010583.10010588.10010559</concept_id>
<concept_desc>Hardware~Sensors and actuators</concept_desc>
<concept_significance>300</concept_significance>
</concept>
<concept>
<concept_id>10010583.10010584.10010587</concept_id>
<concept_desc>Hardware~PCB design and layout</concept_desc>
<concept_significance>100</concept_significance>
</concept>
</ccs2012>
\end{CCSXML}

\ccsdesc[300]{Computing methodologies~Shape analysis}
\ccsdesc[300]{Theory of computation~Computational geometry}

\printccsdesc   
\end{abstract}  


\section{Introduction}
\label{sec:intro}
Processing and analyzing complex 3D objects is a major area of study with applications in computer graphics, medical imaging and other domains. The underlying structure of such data can be highly detailed and require dense point sets and meshes to capture important features. At the same time, shape analysis methods are often designed to only handle objects that consist of tens of thousands of points, thus requiring decimation algorithms to process meshes containing millions of points that can arise in real-world applications. While mesh simplification can lead to good results, it suffers from several drawbacks. First, the simplification process might lead to artifacts and significant loss of detail. Second, for many applications, it remains highly non-trivial to accurately transfer the results of analysis from the simplified to original shapes. Finally, the transfer process can introduce errors and aliasing artifacts.

\begin{figure}
  \centering
  \includegraphics[width=.9\linewidth]{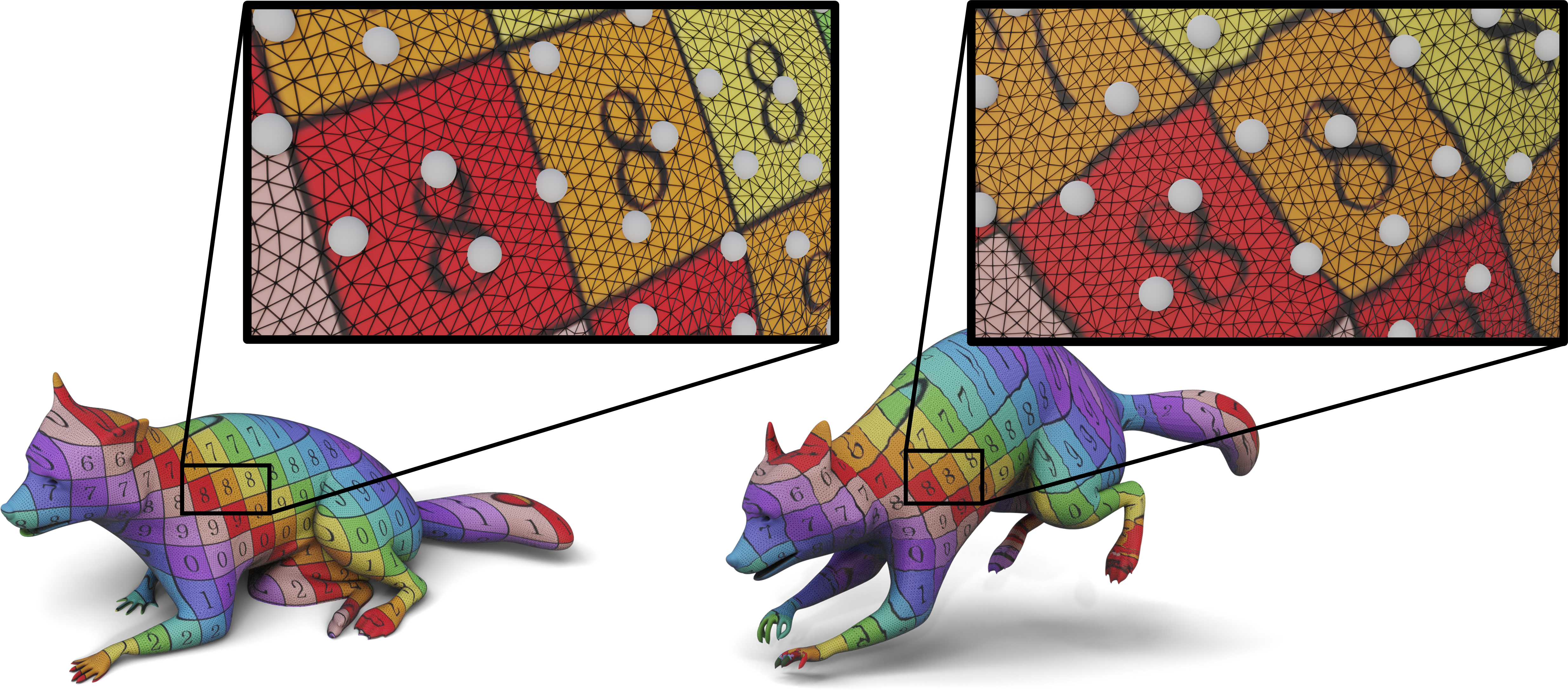}
  \caption{\label{fig:teaser}
           Our method produces point-to-point correspondences between dense meshes \textit{efficiently}, using values only located at sparse samples, displayed in white. The source and target shapes from the \textsc{DeformingThings4D} dataset~\cite{li20214dcomplete} are composed of roughly $100\ 000$ vertices, and correspondences are displayed using texture transfer. The map computation (including all preprocessing) took 60 seconds on a standard machine.}
\end{figure}
In this work, we focus on computing correspondences between non-rigid shapes. This is a long-standing problem in Geometry Processing and related fields, with a wide range of techniques developed in the past few years \cite{deng2022survey,sahilliogluRecentAdvancesShape2020}.
A notable line of work in this domain uses the so-called functional map framework, which is based on manipulating correspondences as matrices in a reduced basis \cite{ovsjanikovFunctionalMapsFlexible2012}. Methods based on this framework have recently achieved high accuracy on a range of difficult non-rigid shape matching tasks \cite{melzi2019shrec,melziZoomOutSpectralUpsampling2019,dyke2020shrec}. Unfortunately, these approaches require costly and time-consuming precomputation of the Laplacian basis and, potentially, other auxiliary data-structures \cite{renContinuousOrientationpreservingCorrespondences2019}. As a result, these techniques do not scale well to densely sampled meshes and, thus, are most often applied on simplified shapes. Moreover, while accelerated versions of some methods \cite{melziZoomOutSpectralUpsampling2019} have recently been proposed, these lack theoretical approximation guarantees, and can be error-prone.

At the same time, several approaches have recently been proposed for efficient approximation of the Laplace-Beltrami basis \cite{nasikunFastApproximationLaplaceBeltrami2018,nasikun2022hierarchical}. These approaches can successfully scale to very large meshes, and are especially effective for computing low frequency eigenfunctions. While these methods have been shown to be efficient when, e.g., using approximated spectra as shape descriptors \cite{reuter2006laplace} or for individual shape processing, they can come short when applied in \textit{shape correspondence scenarios}. Conceptually, this is because the objectives and guarantees in \cite{nasikunFastApproximationLaplaceBeltrami2018,nasikun2022hierarchical} only apply at a \textit{global} scale of individual shapes, instead of the local function approximation or function transfer required for functional and point-to-point map computation.

In this work, we make a step towards creating scalable and efficient non-rigid shape correspondence methods, which can handle very large meshes, and are backed by theoretical approximation bounds. We focus on the functional map framework \cite{ovsjanikovComputingProcessingCorrespondences2017} and especially its recent variants based on spectral upsampling, such as the ZoomOut method \cite{melziZoomOutSpectralUpsampling2019} and its follow-up works \cite{huangConsistentZoomOutEfficient2020,xiangDualIterativeRefinement2021,renDiscreteOptimizationShape2021}. These methods are based on iteratively updating functional and point-to-point maps and have been shown to lead to high-quality results in a wide range of cases. Unfortunately, the two major steps: basis pre-computation and interative updating of the pointwise maps can be costly for dense shapes. 

To address this challenge, we propose an integrated pipeline that helps to make both of these steps scalable and moreover comes with approximation guarantees. 
For this we first establish a new functional space inspired by~\cite{nasikunFastApproximationLaplaceBeltrami2018}, and demonstrate how it can be used to define an approximation of functional maps without requiring either a dense pointwise correspondence or a even basis on the dense meshes. We then provide theoretical approximation bounds for this construction that, unlike the original definition in \cite{ovsjanikovFunctionalMapsFlexible2012} is fully agnostic to the number of points in the original mesh.
Following this analysis, we extend the approach introduced in \cite{nasikunFastApproximationLaplaceBeltrami2018} to improve our functional map approximation, and present an efficient and scalable algorithm for map refinement, based on our constructions, which eventually produces accurate results in the fraction of the time required for standard processing, as displayed on Figure~\ref{fig:pipeline}.

\begin{figure*}[tbp]
  \centering
  \mbox{} \hfill
  \includegraphics[width=.8\linewidth]{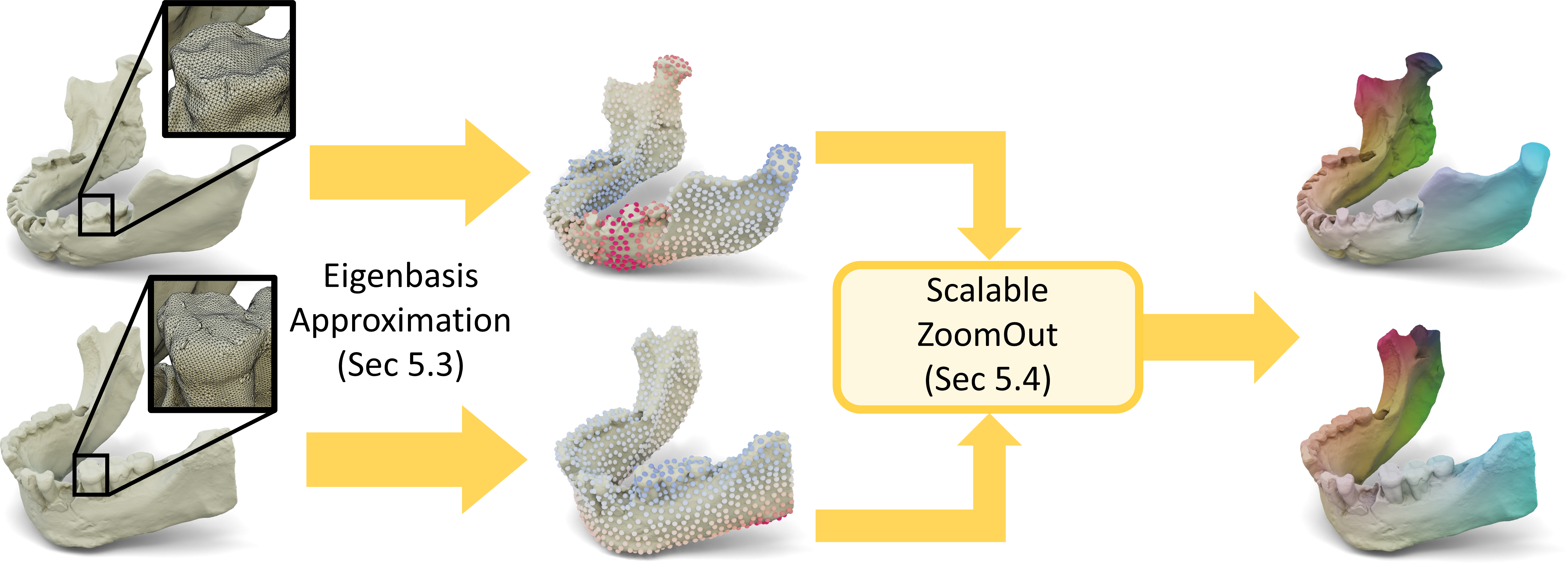}
  \hfill \mbox{}
  \caption{\label{fig:pipeline}%
  Overall pipeline of our method, using real data from~\cite{renGeometricAnalysisShape2020}. Given two dense input shapes, we first generate an approximate eigenbasis computation by using a modified version of the approach introduced in \cite{nasikun2022hierarchical} (Sec.~\ref{sec:new eigenbasis}). We then propose a new scalable version of ZoomOut (Sec.~\ref{sec:final algorithm}), which exploits our functional map approximation (Sec~\ref{sec:new FM}) and comes with theoretical appromation bounds. Ultimately, this leads to dense pointwise correspondences between the two input shapes visualized here via color transfer.
  }
\end{figure*}
\section{Related Works}
\label{sec:related works}
Our main focus is on designing a scalable and principled approach for non-rigid shape correspondence, within the functional map framework. We therefore review works that are most closely related to ours, especially those using spectral techniques for shape matching, and refer the interested readers to recent surveys \cite{van2011survey,tam2012registration,biasotti2016recent,sahilliogluRecentAdvancesShape2020,deng2022survey} for a more comprehensive overview of other approaches.

\paragraph*{Spectral methods in shape matching}
The idea of using the spectral properties of the Laplace-Beltrami, and especially its eigenfunctions for shape correspondence has been investigated in many existing works. Early approaches focused on directly aligning the eigenfunctions, seen as descriptors, \cite{mateus2008articulated,jain2007non} or using diffusion processes to derive descriptors or embedding spaces, e.g., \cite{sharma2010shape,ovsjanikov2010}, among others.

A more principled framework was introduced in
\cite{ovsjanikovFunctionalMapsFlexible2012}, based on the idea of functional maps. The overall strategy is to express the pull-back of functions as an operator in a reduced basis, and to formulate objective functions based on desirable properties of such an operator. The main advantage of this approach is that it leads to small-scale optimization problems, with the number of unknowns independent of the size of the underlying meshes.

Despite the simplicity of the original approach, its performance is strongly dependent on accurate descriptors and hyper-parameter tuning. As a result, this basic strategy has been extended significantly in many follow-up works,  based both on geometric insights \cite{kovnatskyCoupledQuasiharmonicBases2012,aflaloSpectralMultidimensionalScaling2013,ovsjanikovShapeMatchingQuotient2013,burghard2017embedding,eynard2016coupled}, improved optimization strategies \cite{kovnatsky2016madmm,nognengInformativeDescriptorPreservation2017,renMapTreeRecoveringMultiple2020,renDiscreteOptimizationShape2021}, and richer correspondence models going beyond isometries across complete shapes, \cite{rodolaPartialFunctionalCorrespondence2017,rustamovMapbasedExplorationIntrinsic2013,litanyFullySpectralPartial2017}, among others. 

\textbf{Functional and pointwise maps} While many approaches in the functional map literature focus on the optimization in the spectral domain, it has also been observed that the \emph{interaction} between pointwise and functional correspondences can lead to significant improvement in practice. This was used in the form of the Iterative Closest Point (ICP) refinement in the original article and follow-up works \cite{ovsjanikovFunctionalMapsFlexible2012,maronPointRegistrationEfficient2016,ovsjanikovShapeMatchingQuotient2013} and has since then been extended to map deblurring and denoising \cite{ezuzDeblurringDenoisingMaps2017}, as well as powerful refinement, and even map optimization strategies \cite{melziZoomOutSpectralUpsampling2019,renContinuousOrientationpreservingCorrespondences2019,huangConsistentZoomOutEfficient2020,eisenbergerSmoothShellsMultiScale2020,renDiscreteOptimizationShape2021}. All of these works are based on the insight that manipulating maps in \textit{both} the spectral and spatial (primal) domains can lead to overall improvement in the quality of the results.

Unfortunately, such approaches can often come at a cost of scalability, since the complexity of pointwise maps is directly dependent on the mesh resolution, making it difficult to scale them to highly dense meshes.

\paragraph*{Multi-resolution spectral approaches}
Our work is also related to multi-resolution techniques for approximating spectral quantities, as, e.g., in \cite{vaxman2010multi}, and especially to recent developments for accurate and scalable eigen-solvers geared towards Laplacian eigenfunctions on complex meshes \cite{nasikunFastApproximationLaplaceBeltrami2018,nasikun2022hierarchical}. The latter set of methods have been shown to lead to excellent performance and scalability on tasks involving individual shapes, such as computing their Shape-DNA \cite{reuter2006laplace} descriptors, or performing mesh filtering. Similarly, there exist several spectral coarsening and simplification approaches \cite{liuSpectralCoarseningGeometric2019,lescoatSpectralMeshSimplification2020,honglin2020chordal}  that explicitly aim to coarsen operators, such as the Laplacian while preserving their low frequency eigenapairs. Unfortunately, these methods typically rely on the eigenfunctions on the dense shapes, while the utility of the former approaches in the context of \textit{functional maps} has not yet been fully analyzed and exploited, in part, since, as we show below, this requires \textit{local} approximation bounds.

Finally, we  mention that our work is also related to hierarchical techniques, including functional maps between subdivision surfaces proposed in \cite{shoham2019hierarchical}, and even more closely, to refinement via spectral upsampling \cite{melziZoomOutSpectralUpsampling2019}. However, the former approach relies on a subdivision hierarchy, while the acceleration strategy of the latter, as we discuss below, is based on a scheme that unfortunately can fail to converge in the in the presence of full information.

\paragraph*{Limitations of existing techniques and our contributions}
To summarize, the scalability of existing functional maps-based methods is typically limited by two factors: first, the pre-processing costs associated with the computation of the eigenfunctions of the Laplace-Beltrami operator, and second, the complexity of simultaneously manipulating pointwise and functional correspondences. 

In this context, our key contributions include:
\begin{enumerate}
    \item We define an approximation of the functional map, which requires only a sparse correspondence, and provide a theoretical basis for this construction.
    \item We analyze the basis approximation approach in~\cite{nasikunFastApproximationLaplaceBeltrami2018} for functional map computation, obtaining explicit theoretical upper bounds. We then modify this approach to improve the approximation guarantees, leading to more accurate maps.
    \item We present a principled and scalable algorithm for functional map refinement, based on our constructions, which produces accurate results at a fraction of the time of comparable methods.
\end{enumerate}

\section{Method Overview}
\label{sec:motivation and overview}
As mentioned above, our overall goal is to design a scalable pipeline for non-rigid shape matching that can handle potentially very dense meshes. We base our approach on the ZoomOut variant of the functional map famework \cite{melziZoomOutSpectralUpsampling2019}. However, our constructions can be easily extended to other recent functional maps methods, e.g., \cite{renMapTreeRecoveringMultiple2020,renDiscreteOptimizationShape2021}, which share the same general algorithmic structure. Specifically, ZoomOut and related methods are based on two main building blocks: computing the eigenfunctions of the Laplace-Beltrami operator first, and then iterating between updating the point-to-point and functional maps.

Our general pipeline is displayed on Figure~\ref{fig:pipeline} and consists of the following major steps. First, we generate for each shape a sparse set of samples and a factorized functional space using a modification of the approach introduced in~\cite{nasikunFastApproximationLaplaceBeltrami2018}, described in Sec~\ref{sec:new eigenbasis}. Secondly, we use the approximation of the functional map that we introduce (Sec. \ref{sec:new FM}) to define a scalable version of the ZoomOut algorithm producing a sparse pointwise map. Finally, we extend this sparse map to a dense pointwise map with sub-sample accuracy, by using the properties of the functional subspaces we consider.

%
%

The rest of the paper is organized as follows: in Section~\ref{sec:background} we introduce the notations and background necessary for our approach. 

In Section~\ref{sec:new FM}, we introduce our functional map approximation based on the basis construction approach in ~\cite{nasikunFastApproximationLaplaceBeltrami2018}. Section \ref{sec:approximation bounds} provides explicit approximation errors and Section~\ref{sec:new eigenbasis} describes our modification of the method of~\cite{nasikunFastApproximationLaplaceBeltrami2018}, which helps to improve the theoretical upper bounds we obtained for functional map computation.
%
Given these constructions, we show in Sec.~\ref{sec:final algorithm} how ZoomOut-like algorithms can be defined, first by iteratively updating functional and pointwise maps in the reduced functional spaces, and then how the computed functional map can be extended onto the dense shapes efficiently.  

Section~\ref{sec:implementation} provides implementation details, while Section~\ref{sec:results} is dedicated to extensive experimental evaluation of our approach.

\section{Notations \& Background}
\label{sec:background}
\subsection{Notations}

For a triangle mesh, we denote by $\*W$ and $\*A$ its stiffness and mass matrices that together define the (positive semi-definite) Laplace Beltrami Operator as $L = \*A^{-1} \*W$ . Given two shapes $\Nn$ and $\Mm$ with, respectively, $n$ and $m$ vertices, any vertex-to-vertex map $T:\Nn\to\Mm$ can be represented as a binary matrix $\bm{\Pi}\in\{0,1\}^{n\times m}$ with $\bm{\Pi}_{ij}=1$ if and only if $T(x_i) = y_j$, where $x_i$ denotes the $i$-th vertex on $\Nn$ and $y_j$ the $j$-th vertex in $\Mm$.
%

The eigenfunctions of the Laplace Beltrami operator can be obtained by solving a generalized eigenproblem:
\begin{equation}\label{eq:standard LBO decomposition}
	\*W\psi_i = \lb_i \*A \psi_i,
\end{equation}
where in practice, we typically consider the eigenfunctions corresponding to the $K$ smallest eigenvalues.

\subsection{Functional Maps and the ZoomOut algorithm}

Functional maps were introduced in~\cite{ovsjanikovFunctionalMapsFlexible2012} as a means to perform dense non-rigid shape matching. 
The key insight is that any pointwise map $T:\Nn\to\Mm$ can be transformed into a functional map via composition $F_T:f\in\Ff(\Mm)\mapsto f\circ T\in\Ff(\Nn)$, where $\Ff(\Ss)$ is the space of real-valued functions on a surface $\Ss$. Since $F_T$ is linear, it can be represented as a matrix in the given basis for each space $\big(\psi_i^\Mm\big)_i$ and $\big(\psi_i^\Nn\big)_i$.

If the basis on shape $\Nn$ is orthonormal with respect to $\*A^\Nn$, the functional map $\*C$ can be expressed in the truncated basis of size $K$ on each shape as a $K\times K$ matrix:
\begin{equation}
	\label{eq:fmap_definition}
  \*C = \big(\bm{\Psi}^\Nn\big)^\top \*A^\Nn \bm{\Pi} \bm{\Psi}^\Mm,
\end{equation}
where each basis function on $\Mm$ (resp. $\Nn$) is stacked as a column of $\bm{\Psi}^\Mm$ (resp. $\bm{\Psi}^\Nn$), $\bm{\Pi}$ is the matrix representing the underlying pointwise map, and we use $^\top$ to denote the matrix transpose.

\paragraph*{ZoomOut} Given the Laplace-Beltrami eigenbasis, the ZoomOut algorithm~\cite{melziZoomOutSpectralUpsampling2019} allows to recover high-quality correspondences starting from an approximate initialization, by iterating between two steps: (1) Converting a $k \times k$ functional map to a pointwise map, (2) converting the pointwise map to a functional map of size $k+1 \times k+1$. This method has also been extended to other settings, to both promote cycle consistency \cite{huangConsistentZoomOutEfficient2020} and optimize various energies \cite{renDiscreteOptimizationShape2021} among others. Unfortunately, although simple and efficient, the scalability of this approach is limited, first, by the precomputation of the Laplacian basis, and second by the pointwise map recovery which relies on possibly expensive nearest-neighbor search queries across dense meshes.

Several ad-hoc acceleration strategies have been proposed in ~\cite{melziZoomOutSpectralUpsampling2019}. However, as we discuss below, these do not come with approximation guarantees and indeed can fail to converge in the limit of complete information.

\subsection{Eigenbasis approximation} 
\label{sec:background:approx}

To improve the scalability of spectral methods, recent works~\cite{nasikunFastApproximationLaplaceBeltrami2018, xuFastCalculationLaplaceBeltrami2021} have tried to develop approximations of the Laplace Beltrami eigenbasis, via the reduction of the search space.
Specifically, in~\cite{nasikunFastApproximationLaplaceBeltrami2018}, the authors first sample a set of $p$ points $\Ss = \{v_1, \dots, v_p\}$ on shape $\Mm$ and create a set of $p$ local functions $\inpar{u_1,\dots,u_p}$, each centered on a particular sample point.
Each function $u_j$ is built from an unnormalized function $\Tilde{u}_j$ supported on a geodesic ball of radius $\rho$ around the sample $v_j$, which decreases with the geodesic distance from the center:
\begin{equation}\label{eq:tilde u def}
    \Tilde{u}_j : x\in\Mm \mapsto \chi_\rho\inpar{d^\Mm(x,v_j)}\in\RR
\end{equation}
where $d^\Mm$ is the geodesic distance on shape $\Mm$ and $\chi_\rho:\RR_+\to\RR$ is a differentiable non-increasing function with $\chi_\rho(0)=1$ and $\chi_\rho(x)=0$ for $x\geq \rho$. Choices for $\chi$ are discussed in Appendix~\ref{app:chi function}. 
Finally, local functions $u_j$ are defined to satisfy the partition of the unity by using:
\begin{equation}\label{eq:normalization of u}
    u_j(x) = \frac{\Tilde{u}_j(x)}{\sum_k \Tilde{u}_k(x)} ~\forall~x\in\Mm
\end{equation}

Now only considering functions that lie in the $\text{Span}\left\{u_1,\dots, u_p\right\}$, the original eigendecomposition system in Eq.~\eqref{eq:standard LBO decomposition} reduces to a generalized eigenproblem of size $p\times p$:
\begin{equation}\label{eq:reduced LBO decomposition}
    \overline{\*W}\ \overline{\phi}_i = \bar{\lb}_i \overline{\*A}\  \overline{\phi}_i
\end{equation}
with $\overline{\*W}=\*U^\top\*W \*U$ and $\overline{\*A}=\*U^\top\*A \*U$ where $\*W$ and $\*A$ are the stiffness and area matrices of $\Mm$, and $\*U$ a \textit{sparse} matrix whose columns are values of functions $\{u_j\}_j$. Eigenvectors $\overline{\phi}_i$ are $p$-dimensional vectors describing the coefficients with respect to $\{u_j\}$, which define the approximated eigenvectors as $\overline{\psi}_i = \*U\overline{\phi}_i$. Note that since $\overline{\phi}_i$ are orthonormal with respect to $\overline{\*A}$ this implies that $\overline{\psi}_i$ are orthonormal with respect to $\*A$.

While the original work \cite{nasikunFastApproximationLaplaceBeltrami2018} focused on global per-shape applications such as filtering and Shape-DNA \cite{reuter2006laplace} computation, we build on and modify this pipeline in order to obtain reliable functions to perform dense \textit{shape correspondence}.

\section{Our approach}
\label{sec:contributions}
    In this section, we first present a functional map definition using the basis approximation strategy from of~\cite{nasikunFastApproximationLaplaceBeltrami2018}, and provide theoretical bounds on the approximation error (Secs.~\ref{sec:new FM} and \ref{sec:approximation bounds} respectively). Based by these results, we then introduce our modification of~\cite{nasikunFastApproximationLaplaceBeltrami2018} in Section~\ref{sec:new eigenbasis} which we use in our approach in order to minimize the computed bound. Finally we present our Extended ZoomOut algorithm and provide implementation details in Sections~\ref{sec:final algorithm} and \ref{sec:implementation}.
    
    \subsection{Approximate Functional Map}
    \label{sec:new FM}

As mentioned in Sec. \ref{sec:background:approx} the eigenfunctions computed using the approach in \cite{nasikunFastApproximationLaplaceBeltrami2018} are, by construction, orthonormal with respect to the area matrix $\*A$. Thus, they can be used to compute a functional map following Eq.~\eqref{eq:fmap_definition}. This leads to the following definition:

\begin{definition}\label{def:Functional map on new space}
Given two shapes $\Mm$ and $\Nn$ with approximated eigenfunctions $\big(\Psi_i^\Mm\big)_i$ stacked as columns of matrix $\overline{\bm{\Psi}}^\Mm$ (resp. with $\Nn$), the \textit{reduced} functional map 
associated to a pointwise map $\bm{\Pi}:\Nn\to\Mm$ is defined as:
\begin{equation}\label{eq:Approximated FM}
    \overline{\*C} = \inparsmall{\overline{\bm{\Psi}}^\Nn}^\top \*A^\Nn \bm{\Pi} \overline{\bm{\Psi}}^\Mm
\end{equation}
\end{definition}

Note that this functional map definition uses the approximated bases. However, it still relies on the knowledge of a full point-to-point map between complete (possibly very dense) shapes.
%
%
To alleviate this constraint, we introduce another functional map $\overline{\*C}$ that only relies on maps between samples, independently from the original number of points:
\begin{definition}\label{def:reduced FM}
Using the same setting as in Definition~\ref{def:Functional map on new space}, with eigenfunctions arising from Eq.~\eqref{eq:reduced LBO decomposition}, $\big(\overline{\phi}_i^\Mm\big)_i$ (resp. with $\Nn$) being stacked as columns of a matrix $\overline{\bm{\Phi}}^\Mm$ (resp. with $\Nn$), 
given a point-wise map $\overline{\bm{\Pi}}:\Ss^\Nn\to\Ss^\Mm$, our \textit{restricted} 
functional map 
is defined as:
\begin{equation}\label{eq:reduced FM}
    \widehat{\*C} = \inpar{\overline{\bm{\Phi}}^\Nn}^\top \overline{\*A}^\Nn \overline{\bm{\Pi}}\ \overline{\bm{\Phi}}^\Mm
\end{equation}
\end{definition}

Recall that, as mentioned in Sec~\ref{sec:background:approx} $\Ss$ denotes the sparse set of samples on each shape. Therefore, in order to define $\widehat{\*C}$, we only need to have access to a pointwise map between \textit{the sample points} on the two shapes. This restricted functional map $\widehat{\*C}$ is a pull-back operator associated to the reduced spaces $\text{Span}\big\{\overline{\phi}_k^\Mm\big\}_k$ and $\text{Span}\big\{\overline{\phi}_k^\Nn\big\}_k$ since both families are orthonormal with respect to $\overline{\*A}$.
Furthermore, using the factorization $\overline{\bm{\Psi}}=\*U\overline{\bm{\Phi}}$ on each shape in~\eqref{eq:Approximated FM} as well as the definition of $\overline{\*A}$, we remark that going from Eq.~\eqref{eq:Approximated FM} to~\eqref{eq:reduced FM} only requires the approximation $\bm{\Pi}\*U^\Mm \simeq \*U^\Nn\overline{\bm{\Pi}}$, for which we will later on derive an upper bound in Proposition~\ref{prop:Approximation Error}.
Note that one might want to replace $\overline{\bm{\Phi}}^\Mm$ by $\overline{\bm{\Psi}}^\Mm$ in Eq.~\eqref{eq:reduced FM} so that the map $\overline{\bm{\Pi}}$ actually transports pointwise values rather than coefficients. In practice as evaluated in Appendix~\ref{app:coefficient correction}, we did not observe any improvement using this modification.

The first benefit of the approximated functional map in Eq.~\eqref{eq:reduced FM} compared to the exact one in Eq.~\eqref{eq:Approximated FM} is the exclusive use of small-sized matrices. Observe that functions $\big(\overline{\phi}_i\big)_i$, are associated with the area  and stiffness matrices $\overline{\*A}$ and $\overline{\*W}$, which define the $L_2$ and $W_1$ inner products, thus allowing to use \textit{all} functional map related algorithms in a straightforward way \emph{without} using any extra approximation or acceleration heuristics.
Eventually a dense pointwise-map between complete shapes can be obtained by identifying the two pull-back operators $\widehat{\*C}$ and $\overline{\*C}$, as described later in Section~\ref{sec:final algorithm}.
As we will see, the resulting correspondences outperform those obtained using remeshed versions of shape and nearest neighbor extrapolation, as our functional map produces sub-sample accuracy.

Secondly, as shown in the following section, our approach is backed by strong theoretical convergence guarantees, providing bounds on approximation errors. In contrast, previous approaches, such as the accelerated version of ZoomOut \cite{melziZoomOutSpectralUpsampling2019} (Sec. 4.2.3) \emph{might not} converge to the true functional maps even when using all available information. 
Namely, Fast ZoomOut indeed samples $q$ points on shapes $\Mm$ and $\Nn$, and approximates $\overline{\*C}$ using
\begin{equation}\label{eq:fast ZoomOut approx}
    {\*C}_{\text{F-ZO}} = \uargmin{\*X} \| \*Q^\Nn \bm{\Psi}^\Nn \*X - {\bm{\Pi}}\*Q^\Mm {\bm{\Psi}}^\Mm \|_F^2
\end{equation}
where $\*Q^\Nn\in\{0,1\}^{q\times n^\Nn}$ with $\*Q^\Nn_{ij}=1$ if and only if $x_j$ is the $i^\text{th}$ sample on shape $\Nn$ (similarly for $\Mm$). Using all points means $\*Q$ matrices are identity. This approximation gives equal importance to all sampled points regardless of their area, and thus fails to converge towards the underlying $\*C$ as the number of samples increases. This means a near uniform sampling strategy is required in practice, which is difficult to achieve on very dense meshes.

In the following section, we provide approximation error bounds for our functional map definition, which we later use to modify the approach from~\cite{nasikunFastApproximationLaplaceBeltrami2018} to reduce these errors and obtain a more accurate and principled correspondence approach.


    \subsection{Approximation Errors}
    \label{sec:approximation bounds}
    Most expressions above involve a given pointwise map $\bm{\Pi}$ between surfaces $\Nn$ and $\Mm$. 
The following lemma provides simple assumptions to obtain a Lipschitz constant for its associated functional map, which will be very useful to derive bounds on the approximation errors of our estimators:
\begin{lemma}\label{lemma:bounded distortion}
Let $\Mm$ and $\Nn$ be compact surfaces and $T:\Nn\to\Mm$ a diffeomorphism.
Then there exists $B_T\in\RR$ so that:
\begin{equation}
    \left\|f\circ T\right\|_{\Nn} \leq B_T\left\|f\right\|_{\Mm} \quad \forall\ f\in L^2(\Mm)
\end{equation}
\end{lemma}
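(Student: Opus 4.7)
The plan is to prove this Lipschitz-type estimate via a change-of-variables argument combined with a compactness argument bounding the Jacobian of $T^{-1}$.

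First, I would expand the left-hand side using the definition of the $L^2$ norm on $\Nn$ with respect to the area measure $dA_\Nn$:
\begin{equation*}
    \|f\circ T\|_\Nn^2 = \int_\Nn |f(T(x))|^2\, dA_\Nn(x).
\end{equation*}
Since $T:\Nn\to\Mm$ is a diffeomorphism between compact surfaces, it is invertible with a smooth inverse $T^{-1}:\Mm\to\Nn$. Performing the change of variables $y = T(x)$ (equivalently $x = T^{-1}(y)$) yields
\begin{equation*}
    \|f\circ T\|_\Nn^2 = \int_\Mm |f(y)|^2\, |J_{T^{-1}}(y)|\, dA_\Mm(y),
\end{equation*}
where $|J_{T^{-1}}|$ denotes the Jacobian determinant of $T^{-1}$ viewed as a map between the Riemannian surfaces $\Mm$ and $\Nn$ (i.e.\ the ratio $dA_\Nn\circ T^{-1} / dA_\Mm$).

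Next, I would exploit compactness. Since $T$ is a diffeomorphism between compact surfaces, $T^{-1}$ is smooth, hence $|J_{T^{-1}}|$ is continuous on the compact set $\Mm$, and therefore attains a finite maximum. Setting
\begin{equation*}
    B_T^2 := \sup_{y\in\Mm} |J_{T^{-1}}(y)| < \infty,
\end{equation*}
we obtain the pointwise bound $|J_{T^{-1}}(y)| \le B_T^2$ everywhere on $\Mm$, from which
\begin{equation*}
    \|f\circ T\|_\Nn^2 \le B_T^2\int_\Mm |f(y)|^2\, dA_\Mm(y) = B_T^2\,\|f\|_\Mm^2,
\end{equation*}
and taking square roots gives the claim.

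There is no serious obstacle in this argument; the only thing to be careful about is justifying the change of variables in the intrinsic setting of Riemannian surfaces (which is standard, e.g.\ working in local charts and using a partition of unity, since the statement only requires a bound rather than an equality). If $f$ is merely in $L^2(\Mm)$ rather than continuous, one can either approximate by smooth functions and pass to the limit, or simply invoke the general change-of-variables formula for bi-Lipschitz maps applied to the Borel measure defined by the Riemannian volume form. The explicit constant is $B_T = \sup_{y\in\Mm}\sqrt{|J_{T^{-1}}(y)|}$, which depends only on $T$ and not on $f$, as required.
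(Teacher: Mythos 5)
Your argument is correct: the change of variables $y=T(x)$ together with compactness, which makes the Jacobian of $T^{-1}$ continuous and hence bounded on $\Mm$, gives exactly the stated estimate with $B_T=\sup_{y\in\Mm}\sqrt{|J_{T^{-1}}(y)|}$, and the density/bi-Lipschitz remark handles general $f\in L^2(\Mm)$. The paper does not prove this lemma itself but cites Proposition 3.3 of Huang and Ovsjanikov (2018), whose proof is this same area-distortion argument—consistent with the paper's own remark in Appendix E that $B_T$ equals $1$ for isometries and is otherwise governed by the area distortion of $T$—so your proposal matches the intended approach.
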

the proof of which can be found in~\cite{huangStabilityFunctionalMaps2018} (Proposition 3.3).


Our overall goal is to use the newly designed functional map $\overline{\*C}$ within a ZoomOut-like functional map estimation algorithm. We therefore expect the approximated functional map to mimic the underlying map $\*C$ when the computed eigenvectors $\overline{\Psi}_k$ approximate well the true ones $\Psi_k$. The following proposition bounds the error between the two functional maps:

\begin{proposition}\label{prop:Approx FM vs GT FM}
Let $\overline{\bm{\Psi}}^\Nn$ (resp. $\overline{\bm{\Psi}}^\Mm$) and $\bm{\Psi}^\Nn$ (resp. $\bm{\Psi}^\Mm$) the approximated and true first $K$ eigenvectors of the Laplacian on $\Nn$ (resp. $\Mm$).
Let $\*C$ and $\overline{\*C}$ be the original and reduced (see Eq.~\eqref{eq:Approximated FM}) functional maps of size $K$, associated to the map $T$. Suppose that $T$ is a diffeomorphism, and let $B_T$ be the bound given by Lemma~\ref{lemma:bounded distortion}.\\
If there exists $\varepsilon\in\RR_+^*$ so that for any $j\in\{1,\dots,K\}$ :
\begin{equation*}
    \|\bm{\Psi}^\Nn_j - \overline{\bm{\Psi}}^\Nn_j\|_\infty \ \leq \varepsilon \text{ and } \|\bm{\Psi}^\Mm_j - \overline{\bm{\Psi}}^\Mm_j\|_\infty \ \leq \varepsilon
\end{equation*}
Then:
\begin{equation}
    \frac{1}{K}\left\| \*C - \overline{\*C}\right\|_2^2 \leq \varepsilon^2 \inpar{1+B_T^2}
\end{equation}
\end{proposition}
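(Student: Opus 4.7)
The strategy is to decompose $\*C - \overline{\*C}$ as a sum of two matrices, one accounting for the eigenvector approximation error on $\Nn$ and one on $\Mm$, and then to bound each piece using Cauchy--Schwarz, Bessel's inequality, and Lemma~\ref{lemma:bounded distortion}. Concretely, I would write
\begin{equation*}
\*C - \overline{\*C}
= \underbrace{(\bm{\Psi}^\Nn - \overline{\bm{\Psi}}^\Nn)^\top \*A^\Nn \bm{\Pi}\, \overline{\bm{\Psi}}^\Mm}_{\*M_1}
+ \underbrace{(\bm{\Psi}^\Nn)^\top \*A^\Nn \bm{\Pi}\, (\bm{\Psi}^\Mm - \overline{\bm{\Psi}}^\Mm)}_{\*M_2},
\end{equation*}
so that in the continuous setting the $(i,j)$ entry of $\*M_2$ is $\langle \psi_i^\Nn, (\psi_j^\Mm - \overline{\psi}_j^\Mm)\circ T\rangle_\Nn$, and symmetrically for $\*M_1$.

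To bound $\|\*M_2\|_F^2$, I would fix $j$ and apply Bessel's inequality to the \emph{true} $L^2(\Nn)$-orthonormal family $\{\psi_i^\Nn\}_i$, obtaining $\sum_i (\*M_2)_{ij}^2 \leq \|(\psi_j^\Mm - \overline{\psi}_j^\Mm)\circ T\|_{L^2(\Nn)}^2$. Lemma~\ref{lemma:bounded distortion} then upgrades this to $B_T^2 \|\psi_j^\Mm - \overline{\psi}_j^\Mm\|_{L^2(\Mm)}^2$, which, by the $L^\infty\to L^2$ embedding on a unit-area surface, is at most $B_T^2 \varepsilon^2$. The bound on $\|\*M_1\|_F^2$ proceeds analogously but on the other side: for fixed $i$, passing to the adjoint $F_T^*$ rewrites the entry as $\langle F_T^*(\psi_i^\Nn - \overline{\psi}_i^\Nn), \overline{\psi}_j^\Mm\rangle_\Mm$, and Bessel's inequality on the $L^2(\Mm)$-orthonormal family $\{\overline{\psi}_j^\Mm\}_j$ together with $\|F_T^*\|_{\mathrm{op}} = \|F_T\|_{\mathrm{op}} \leq B_T$ controls the sum by $\varepsilon^2$ or $B_T^2 \varepsilon^2$ depending on which orthonormal family is invoked. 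Summing the remaining index and applying the triangle inequality should give a bound on $\|\*C - \overline{\*C}\|_F^2$ that is linear in $K$ and has the stated structure $1+B_T^2$ after dividing by $K$.

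The main obstacle will be extracting exactly the clean constant $\varepsilon^2(1 + B_T^2)$ rather than a wasteful $c\, B_T^2 \varepsilon^2$. This rules out the routine $(a+b)^2 \leq 2(a^2+b^2)$ step after the triangle inequality, and instead requires choosing the decomposition so that one of the two matrices only pairs an approximation error with a non-composed true eigenfunction (yielding a clean $\varepsilon^2$ contribution with no $B_T$ factor), while the other absorbs the composition with $T$ and picks up the $B_T^2$. The asymmetry between $\*M_1$ (which involves the approximate basis $\overline{\bm{\Psi}}^\Mm$) and $\*M_2$ (which involves the true basis $\bm{\Psi}^\Nn$) is precisely what enables this: by pairing Bessel's inequality with the exact orthonormal basis on whichever side is most convenient, the two pieces contribute $\varepsilon^2$ and $B_T^2\varepsilon^2$ respectively, yielding the target bound.
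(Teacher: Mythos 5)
Your decomposition of $\*C-\overline{\*C}$ is the same one the paper uses (entrywise it is the identity $f_1g_1-f_2g_2=f_1(g_1-g_2)+g_2(f_1-f_2)$ with $f_1=\psi_i^\Nn$, $g_1=\psi_j^\Mm\circ T$, $f_2=\overline{\psi}_i^\Nn$, $g_2=\overline{\psi}_j^\Mm\circ T$), but the bounding strategy has a gap exactly where you sense one. In the paper, the clean $\varepsilon^2$ contribution does not come from a clever choice of orthonormal family in Bessel's inequality: it comes from the fact that the hypothesis is an $L^\infty$ bound, and sup-norms are unchanged by composition with $T$, so $\|(\psi_j^\Mm-\overline{\psi}_j^\Mm)\circ T\|_\infty\le\varepsilon$ and hence its $L^2(\Nn)$ norm is at most $\varepsilon$ with \emph{no} $B_T$ factor; the constant $B_T$ enters only through $\|\overline{\psi}_j^\Mm\circ T\|_\Nn\le B_T\|\overline{\psi}_j^\Mm\|_\Mm=B_T$ in the other term. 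Your concrete estimates do the opposite: you apply Lemma~\ref{lemma:bounded distortion} to the composed error in $\*M_2$ (yielding $B_T^2\varepsilon^2$ there), and you also route $\*M_1$ through $\|F_T^*\|_{\mathrm{op}}\le B_T$, so as written both pieces carry $B_T^2$ and the triangle inequality yields roughly $4B_T^2\varepsilon^2$, which is weaker than the stated bound (note $B_T\ge1$ for area-normalized shapes, since constants are preserved). Your final paragraph claims one piece contributes a clean $\varepsilon^2$ ``depending on which orthonormal family is invoked,'' but Bessel's inequality contributes a factor $1$ in both pieces regardless of the family; the missing ingredient is the pullback-invariance of the $L^\infty$ hypothesis, which you never invoke.

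There is a second gap at the recombination step. Even with the corrected per-piece bounds $\|\*M_2\|_F^2\le K\varepsilon^2$ and $\|\*M_1\|_F^2\le KB_T^2\varepsilon^2$, nothing lets you conclude $\|\*M_1+\*M_2\|_F^2\le K\varepsilon^2\inpar{1+B_T^2}$: the two matrices are not Frobenius-orthogonal, and the triangle inequality only gives the constant $(1+B_T)^2$. You explicitly rule out the lossy $(a+b)^2\le2(a^2+b^2)$ route but supply no replacement. The paper handles this entrywise: each entry of $\*C-\overline{\*C}$ is bounded using $|f_1(g_1-g_2)|\le\varepsilon|f_1|$ and $|g_2(f_1-f_2)|\le\varepsilon|g_2|$ together with $\|f_1\|_\Nn=1$ and $\|g_2\|_\Nn\le B_T$, and only then squared and summed. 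To repair your argument, either switch to that entrywise treatment, or keep your (otherwise attractive, and per-column sharper) Bessel argument and accept the slightly weaker constant $\varepsilon^2(1+B_T)^2$ in place of $\varepsilon^2(1+B_T^2)$.
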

The proof can be found in Appendix~\ref{app:proof:Approx FM vs GT FM}. This proposition ensures that a good estimation of the spectrum implies an accurate functional map approximation, and thus its good behavior within matching algorithms.

A more fundamental error to control is the estimation error between the functional maps $\overline{\*C}$ from Def.~\ref{def:Functional map on new space} and $\widehat{\*C}$ from Def.~\ref{def:reduced FM}. As mentioned above, the estimation relies on the identification $\bm{\Pi}\overline{\bm{\Psi}}^\Mm \simeq \*U^\Nn\overline{\bm{\Pi}}\ \overline{\bm{\Phi}}^\Mm$, where $\overline{\bm{\Pi}}$ is a map between the two sets of samples $\Ss^\Nn$ and $\Ss^\Mm$, which we expect to be similar to $\bm{\Pi}$ on these spaces. This approximation treats equivalently the two following procedures: 1) interpolating between values on $\Ss^\Mm$ then transferring using the map $\bm{\Pi}$, 2) transferring values on $\Ss^\Mm$ to values on $\Ss^\Nn$ using $\overline{\bm{\Pi}}$ and then interpolating on $\Nn$.
The following proposition bounds the error of this approximation:
\begin{proposition}\label{prop:Approximation Error}
Let $T:\Nn\to\Mm$ be a pointwise map between the shapes represented by $\bm{\Pi}$, and let $B_T$ be the bound given by Lemma~\ref{lemma:bounded distortion}. Suppose that $T_{\vert\Ss^\Nn}:\Ss^\Nn\to\Ss^\Mm$ is represented by $\overline{\bm{\Pi}}$.\\
Let $\alpha=\min_j u_j^\Mm(v_j)\in[0,1]$. 
Suppose further that there exists $\eps>0$ so that for any $k\in\{1,\dots,K\}$ and $x,y\in\Ss^\Mm$:
\begin{equation}
    d^\Mm(x,y) \leq \rho^\Mm \Rightarrow | \overline{\Psi}^\Mm_k(x) - \overline{\Psi}^\Mm_k(y)|\leq \eps
\end{equation}
and
\begin{equation}
    \ d^\Mm(x,y) \leq \rho^\Mm \Rightarrow | \overline{\Phi}^\Mm_k(x) - \overline{\Phi}^\Mm_k(y)|\leq \eps.
\end{equation}
Then
\begin{equation}
    \frac{1}{K}\left\|\bm{\Pi}\overline{\bm{\Psi}}^\Mm - \*U^\Nn\overline{\bm{\Pi}}\ \overline{\bm{\Phi}}^\Mm \right\|_\Nn^2 \leq \eps^2(1-\al) +\eps^2B_T^2
\end{equation}
\end{proposition}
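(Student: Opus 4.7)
The plan is to reduce the matrix norm on the left to a column-wise $L^2(\Nn)$ bound on the scalar error and then split the error additively. Writing $h_k(x):=\overline{\Psi}^\Mm_k(T(x))$ and letting $T_s:\Ss^\Nn\to\Ss^\Mm$ denote the sample bijection encoded by $\overline{\bm{\Pi}}$, the $(i,k)$ entry of $\bm{\Pi}\overline{\bm{\Psi}}^\Mm - \*U^\Nn\overline{\bm{\Pi}}\,\overline{\bm{\Phi}}^\Mm$ is
\[
    E_k(x) = \overline{\Psi}^\Mm_k(T(x)) - \sum_j u^\Nn_j(x)\,\overline{\Phi}^\Mm_{T_s(j),k}.
\]
Using the partition of unity $\sum_j u^\Nn_j(x)=1$ on $\Nn$ together with the compatibility $T(v^\Nn_j)=v^\Mm_{T_s(j)}$, one can insert the intermediate value $h_k(v^\Nn_j)=\overline{\Psi}^\Mm_k(v^\Mm_{T_s(j)})$ to obtain a telescoping split $E_k = A_k + B_k$ with
\[
    A_k(x)=\sum_j u^\Nn_j(x)\bigl(h_k(x)-h_k(v^\Nn_j)\bigr),\quad
    B_k(x)=\sum_j u^\Nn_j(x)\bigl(\overline{\Psi}^\Mm_k(v^\Mm_{T_s(j)})-\overline{\Phi}^\Mm_{T_s(j),k}\bigr).
\]

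For $B_k$ I would rely on a partition-of-unity identity on $\Mm$. Subtracting $\overline{\Phi}^\Mm_{l,k} = \sum_m u^\Mm_m(v^\Mm_l)\overline{\Phi}^\Mm_{l,k}$ from the expansion $\overline{\Psi}^\Mm_k(v^\Mm_l) = \sum_m u^\Mm_m(v^\Mm_l)\overline{\Phi}^\Mm_{m,k}$ yields
\[
    \overline{\Psi}^\Mm_k(v^\Mm_l)-\overline{\Phi}^\Mm_{l,k} = \sum_{m\neq l} u^\Mm_m(v^\Mm_l)\bigl(\overline{\Phi}^\Mm_{m,k}-\overline{\Phi}^\Mm_{l,k}\bigr),
\]
and each summand is supported on $\rho^\Mm$-neighbors of $v^\Mm_l$, so the $\overline{\bm{\Phi}}^\Mm$ smoothness hypothesis controls each inner difference by $\eps$. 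Together with $\sum_{m\neq l}u^\Mm_m(v^\Mm_l) = 1-u^\Mm_l(v^\Mm_l)\le 1-\alpha$ and a Jensen step on the outer $u^\Nn_j$ sum, this yields the pointwise bound $|B_k(x)|^2\le(1-\alpha)\eps^2$ (absorbing one factor $(1-\alpha)\le 1$), hence $\|B_k\|_\Nn^2\le\eps^2(1-\alpha)$.

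The hard part is $A_k$: the smoothness of $\overline{\Psi}^\Mm_k$ is stated at neighboring samples on $\Mm$, yet $A_k$ compares the eigenfunction at the arbitrary pulled-back point $T(x)$ with its value at $v^\Mm_{T_s(j)}$. The plan is to apply Jensen pointwise, $|A_k(x)|^2\le\sum_j u^\Nn_j(x)|(f_j\circ T)(x)|^2$ with $f_j(y)=\overline{\Psi}^\Mm_k(y)-\overline{\Psi}^\Mm_k(v^\Mm_{T_s(j)})$, and then transfer the integral onto $\Mm$ using Lemma~\ref{lemma:bounded distortion}, paying the price $B_T^2$. Expanding $\overline{\Psi}^\Mm_k$ in the basis $\{u^\Mm_l\}$ and applying the same partition-of-unity identity as in the $B_k$ step reduces $|f_j(y)|$ on the relevant patches to a weighted combination of differences $|\overline{\Phi}^\Mm_{l,k}-\overline{\Phi}^\Mm_{T_s(j),k}|\le\eps$, leading to $\|A_k\|_\Nn^2\le\eps^2 B_T^2$. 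The main obstacle here is precisely the mismatch between the $L^2$-only statement of Lemma~\ref{lemma:bounded distortion} and the pointwise Lipschitz-type control one would naively want: the localization by the $u^\Nn_j$ weights before invoking $B_T$ is what makes the argument go through.

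To conclude, combine the two bounds column-wise, keeping the $(1-\alpha)$ and $B_T^2$ contributions coupled through the partition-of-unity weights so that they combine additively into $\eps^2(1-\alpha)+\eps^2 B_T^2$ rather than via a loose Minkowski step; then sum over $k=1,\dots,K$ and divide by $K$ to obtain the stated inequality.
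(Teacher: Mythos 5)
Your decomposition is exactly the paper's: the intermediate term you insert is the $\Nn$-interpolant $\*U^\Nn$ applied to the sampled values of the pulled-back functions, your $B_k$ estimate reproduces (in a slightly cleaner, directly pointwise form) the paper's partition-of-unity/self-weight argument yielding $\eps^2(1-\al)$, and your $A_k$ estimate corresponds to the paper's first term, which is likewise dispatched by combining the interpolation-error bound of Proposition~\ref{prop:interpolation error} with Lemma~\ref{lemma:bounded distortion} to obtain $\eps^2 B_T^2$. The obstacle you flag in $A_k$ --- that $B_T$ is only an $L^2$ operator bound and does not by itself convert closeness of $x$ to $v_j^\Nn$ into closeness of $T(x)$ to $T(v_j^\Nn)$ --- is glossed over in the paper's own one-line treatment of that term as well, so your sketch follows the same route and sits at essentially the same level of rigor as the original proof.
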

The proof is given in Appendix~\ref{app:proof:Approximation Error}. This proposition shows that the estimation error depends on two parameters: 1) the variation $\varepsilon$ of the eigenfunctions w.r.t to the sample distance $\rho$, 2) the \emph{self-weights} $u_j(v_j)$ from the local functions defined in the basis approximation.
Note that since the basis functions $u_j$ verify $0\leq u_j\leq 1$ and satisfy the partition of unity, they can be interpreted as interpolation weights from values at sampled points to values on the entire shape. This makes the dependence in $\alpha$ more intuitive, as our approximation relies on the local identification of basis coefficients with function values.
A discussion on the numerical values of the quantities used in Proposition~\ref{prop:Approximation Error} are provided in Appendix~\ref{app:parameters values}.

In the following, we will therefore seek to modify the basis approximation~\cite{nasikunFastApproximationLaplaceBeltrami2018} in order to maximize $\al$ while retaining both the quality of the approximation of the true Laplacian spectrum, necessary to apply functional maps-related algorithms.

\begin{figure}
    \centering
    \includegraphics[width=.9\linewidth]{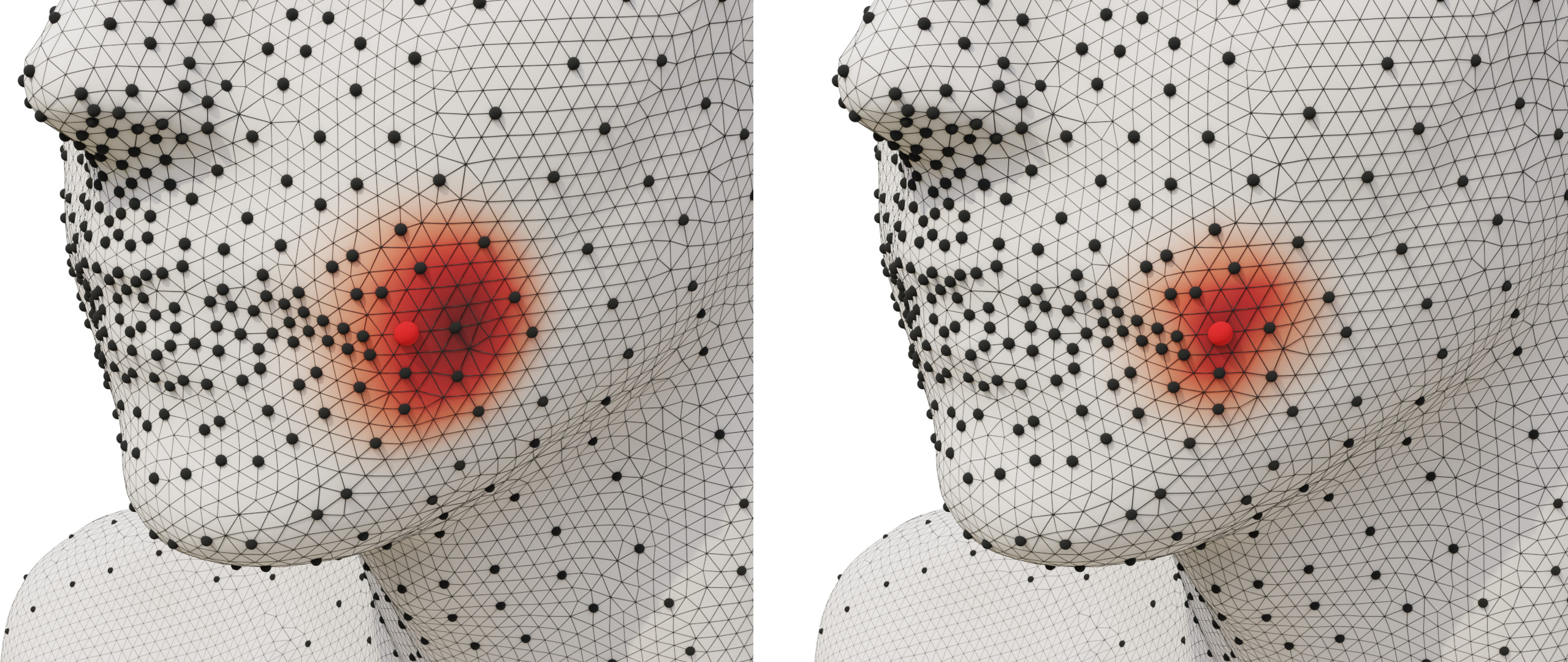}
    \caption{Example of a local function $u_j$ (red color) centered on $v_j$ (red vertex),  visualized without (Left) and with (Right) our adaptive radius strategy. Other samples $v_k$ are displayed in black.
    }
    \label{fig:adaptive radius}
\end{figure}

    \subsection{Improved Eigenbasis Approximation}
    \label{sec:new eigenbasis}

In this section, we propose a modification of the approach from~\cite{nasikunFastApproximationLaplaceBeltrami2018}, based on the theoretical bounds introduced above. For the rest of this section, we focus on a single shape, as the basis computations are done on each shape independently. 

As seen from Prop.~\ref{prop:Approximation Error}, high self weights allow to stabilize our functional map approximation.
Interestingly with the construction in \cite{nasikunFastApproximationLaplaceBeltrami2018}, the value $u_j(v_j)$  only depends on the geodesic distance between $v_j$ and other sampled points $v_i$ for $i\neq j$:
\begin{equation} \label{eq:self weight}
    u_j(v_j) = \frac{1}{1+\sum_{i\neq j}\tilde{u}_i(v_j)}.
\end{equation}
where $\tilde{u}_i$ are the un-normalized local functions.
We modify the pipeline from~\cite{nasikunFastApproximationLaplaceBeltrami2018} in order to increase these values as follows: we first define a per-sample radius $\rho_j$ for $j\in\{1,\dots,p\}$ instead of a single global value $\rho$.
Given a sample point $v_j$ with a small self-weight $u_j(v_j)$, radius $\rho_j$ is kept untouched as it has no influence on the self-weight, but we instead reduce the radius $\rho_i$ of its most influential neighbor - that is the radius of the point $v_i$ with the highest value $\tilde{u}_i(v_j)$.
Following Eq.~\eqref{eq:self weight} this eventually increases the value of $u_j(v_j)$. 
Note that this modification doesn't change the value $u_i(v_i)$ and increase the self weights of all its neighbors. This way all self weights are non-decreasing during the algorithm, with at least one of them increasing.
This extra adaptation additionally comes at a negligible computational cost as it only requires re-evaluating $u_j$ at a set of fixed vertices. In particular, this does not require additional local geodesic distance computations. More details are provided in Sec.~\ref{sec:implementation}, and the algorithm to compute these new functions is displayed in Algorithm~\ref{alg:adaptive algorithm}.
\begin{figure}
    \centering
    \includegraphics[width=.9\linewidth]{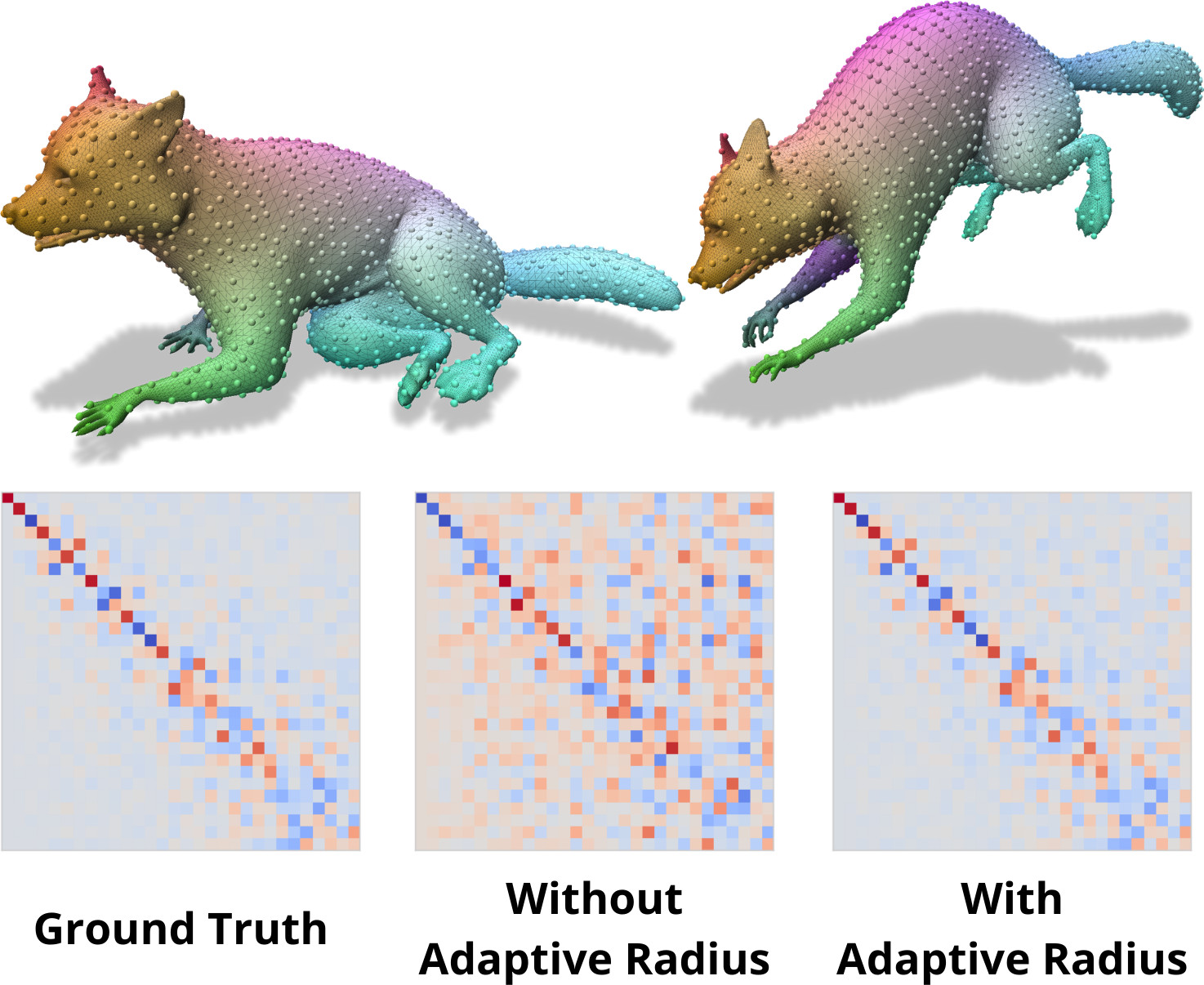}
    \caption{\label{fig:FM vs reduced FM}Effect of the adaptive radius on functional map approximation. Top row displays a pointwise map $T$ from the right mesh to the left mesh using color transfer. Bottom row displays $\overline{\*C}$ (Left), $\widehat{\*C}$ when using the pipeline from~\cite{nasikunFastApproximationLaplaceBeltrami2018} (Middle) and our functional map $\widehat{\*C}$ (Right).}
\end{figure}
We observe that the adaptive radius strategy generates better local functions than those introduced in~\cite{nasikunFastApproximationLaplaceBeltrami2018}, especially for non-uniform sampling, as can be seen on a surface from the DFaust dataset~\cite{bogoDynamicFAUSTRegistering2017} in Figure~\ref{fig:adaptive radius}.
Note that since we focus on \textit{local} analysis, a desirable property of the local interpolation function is the consistency across different shapes when only values at the samples are provided. With a single global radius, we see on Figure~\ref{fig:adaptive radius} that these functions can be heavily distorted by the normalization procedure, which is corrected by our approach. However, increasing the self-weights too close to $1$ also deteriorates the results, as any vertex $x$ within the radius of a single sample will be given the value of the sample point. There thus exists a limit at which this procedure ceases to be helpful, and the only solution then lies in increasing the number of samples on the shape.

\begin{algorithm}
\caption{Computation of local functions with adaptive radius}\label{alg:adaptive algorithm}
\begin{algorithmic}[1]
\Require Mesh $\Mm$, samples $(v_k)_k$, initial $\rho_0$, threshold $\varepsilon$
\State $\rho_j \gets \rho_0\quad\forall j$
\State Compute local functions $U$ with radius $\rho$~:\ \eqref{eq:tilde u def},\ \eqref{eq:normalization of u}
\State Add sample points if necessary
\While{some $k$ with $u_k(v_k) < \varepsilon$} 
    \State $j \gets \uargmax{i\neq k}{u_i(v_k)}$ 
    \State $\rho_j \gets \rho_j/2$
    \State update all $u$ using Eq.~\eqref{eq:tilde u def},\ \eqref{eq:normalization of u}
\EndWhile
\State Add unseen vertices in the sample

\end{algorithmic}
\end{algorithm}

The positive effect of our adaptive radius algorithm for functional map estimation is further visualized in Figure~\ref{fig:FM vs reduced FM}, where given a single pointwise map $T$, 
we display the exact functional map on the approximated spaces $\overline{\*C}$, and two approximated functional maps $\widehat{\*C}$, one being computed with a shared radius~\cite{nasikunFastApproximationLaplaceBeltrami2018} and the other with our adaptive radius scheme.
We highlight that the ground truth functional map actually differ for each approximation $\widehat{\*C}$ as the reduced functional spaces are modified, which makes values not directly comparable. However, we observe that the two ground truth maps have nearly identical sparsity structure  (see Appendix~\ref{app:FM approx}), which is why we only display one in Figure~\ref{fig:FM vs reduced FM}. Note that using the adaptive radius strategy then generates a sparsity pattern on matrix $\widehat{\*C}$ very close to the ground truth one.


    \subsection{Scalable ZoomOut}
    \label{sec:final algorithm}
    In light of the previous discussions and theoretical analysis, we now describe how to use the approximated functional map $\widehat{\*C}$ within a standard ZoomOut pipeline~\cite{melziZoomOutSpectralUpsampling2019}. Our complete pipeline is summarized in Algorithm~\ref{alg:complete algo}, where the notation $\overline{\bm{\Phi}}_{1:k}$ indicates that we only use the first $k$ column of matrix $\overline{\bm{\Phi}}_{1:k}$. 


\begin{algorithm}
\caption{Scalable ZoomOut}\label{alg:complete algo}
\begin{algorithmic}[1]
\Require Meshes $\Mm$ and $\Nn$, threshold $\varepsilon$, initial map
\State Sample $\Ss^\Mm$ and $\Ss^\Nn$ using Poisson Disk Sampling
\State Compute $\*U^\Mm$ and $\*U^\Nn$ using Algo.~\ref{alg:adaptive algorithm}
\State Approximate eigenvectors $\overline{\bm{\Phi}}^\Mm$ and $\overline{\bm{\Phi}}^\Mm$ solving~\eqref{eq:reduced LBO decomposition}
\State Set $\overline{\bm{\Psi}}^\Mm = \*U^\Mm \overline{\bm{\Phi}}^\Mm$ and $\overline{\bm{\Psi}}^\Nn = \*U^\Nn \overline{\bm{\Phi}}^\Nn$
\State Obtain $\overline{\bm{\Pi}}$ between \textit{samples} using the initial map
\For{$k=k_{\text{init}}$:$k_\text{final}$}
\State $\widehat{\*C} = \inpar{\overline{\bm{\Phi}}^\Nn_{1:k}}^\top \overline{\*A}^\Nn \overline{\bm{\Pi}}\ \overline{\bm{\Phi}}^\Mm_{1:k}$
\State $\overline{\bm{\Pi}} = \text{NNsearch}\big(\overline{\bm{\Phi}}^\Mm_{1:k}, \overline{\bm{\Phi}}^\Nn_{1:k}\widehat{\*C}\big)$ potentially using~\eqref{eq:possible images}
\EndFor
\State $\bm{\Pi} = \text{NNsearch}\big(\overline{\bm{\Psi}}^\Mm_{1:k}, \overline{\bm{\Psi}}^\Nn_{1:k}\widehat{\*C}\big)$
\State \textbf{Return} $\bm{\Pi}$
\end{algorithmic}
\end{algorithm}

As mentioned earlier, using $\widehat{\*C}$ and matrices $\overline{\*A}$ and $\overline{\*W}$ allows to apply the ZoomOut algorithm directly, as if it was applied on remeshed versions of the shapes with only $p$ vertices. This results in a refined functional map $\widehat{\*C}^*$ and a refined pointwise map \emph{between samples} $\overline{\bm{\Pi}}^*$. The last remaining non-trivial task consists in converting the refined functional map into a global pointwise map $\bm{\Pi}^*$ between the original dense meshes.

Standard approaches using remeshed versions of the shapes extend maps via nearest neighbors,  resulting in locally constant maps. Instead, we identify $\widehat{\*C}$ and $\overline{\*C}$, which then allows us to compute the pointwise map $\bm{\Pi}^*$ by solving the standard least square problem:
\begin{equation}
    \bm{\Pi}^* = \uargmin{\bm{\Pi}}{\|\overline{\bm{\Psi}}^\Nn\widehat{\*C}^* - \bm{\Pi} \overline{\bm{\Psi}}^\Mm\|_{\*A^\Nn}^2}.
\end{equation}
Since $\*A$ is diagonal this problem reduces to a nearest neighbor search for each vertex $x\in\Nn$. This way, the obtained pointwise map is no longer locally constant which results in a significant gain of quality with respect to typical approaches.

On meshes containing millions of vertices, this nearest neighbor search can, however, still be very slow. In these cases, we propose to use the computed pointwise map $\overline{\bm{\Pi}}$ as a guide to reduce the search space as follows: for $x\in\Nn$, we first select the indices of its nearest sample points $N(x)=\{j\ |\ u_j^\Nn(x) > 0\}$, and create the set of possible \emph{images} as the points in $\Mm$ close to the image of this set under the map $\overline{\bm{\Pi}}$, that is
\begin{equation}\label{eq:possible images}
    \Ii(x) = \{y\ |\ \exists j\in N(x),\ u_{\bar{T}(j)}(y)>0 \}
\end{equation}
where $\bar{T}$ is the function representation of $\overline{\bm{\Pi}}$. Since local functions $u_j$ are compactly supported, in practice, they are stored as sparse vectors and extracting the set of possible images of a given vertex therefore can be done efficiently through simple indexing queries.


    \subsection{Implementation}
    \label{sec:implementation}
    We implement the complete algorithm in Python and provide the code at \URL{https://github.com/RobinMagnet/Scalable_FM}.

Following~\cite{nasikunFastApproximationLaplaceBeltrami2018}, we generate sparse samples $\Ss$ using Poisson Disk sampling, and run a fixed-radius Dijkstra algorithm starting at all sampled points $v_j$ to build local functions $u_j$. Values can be stored in a sparse $n\times p$ matrix where $p$ is the number of samples. Note that the adaptive radius algorithm presented in Section~\ref{sec:new eigenbasis} does not require additional geodesic distance computations. Furthermore finding the set of potential images for a point as described in Section~\ref{sec:final algorithm} simply reduces to checking non-zero indices in a sparse matrix. More details are provided in Appendix~\ref{app:implementation}.

\begin{figure*}[tbp]
  \centering
  \mbox{} \hfill
  \includegraphics[width=.8\linewidth]{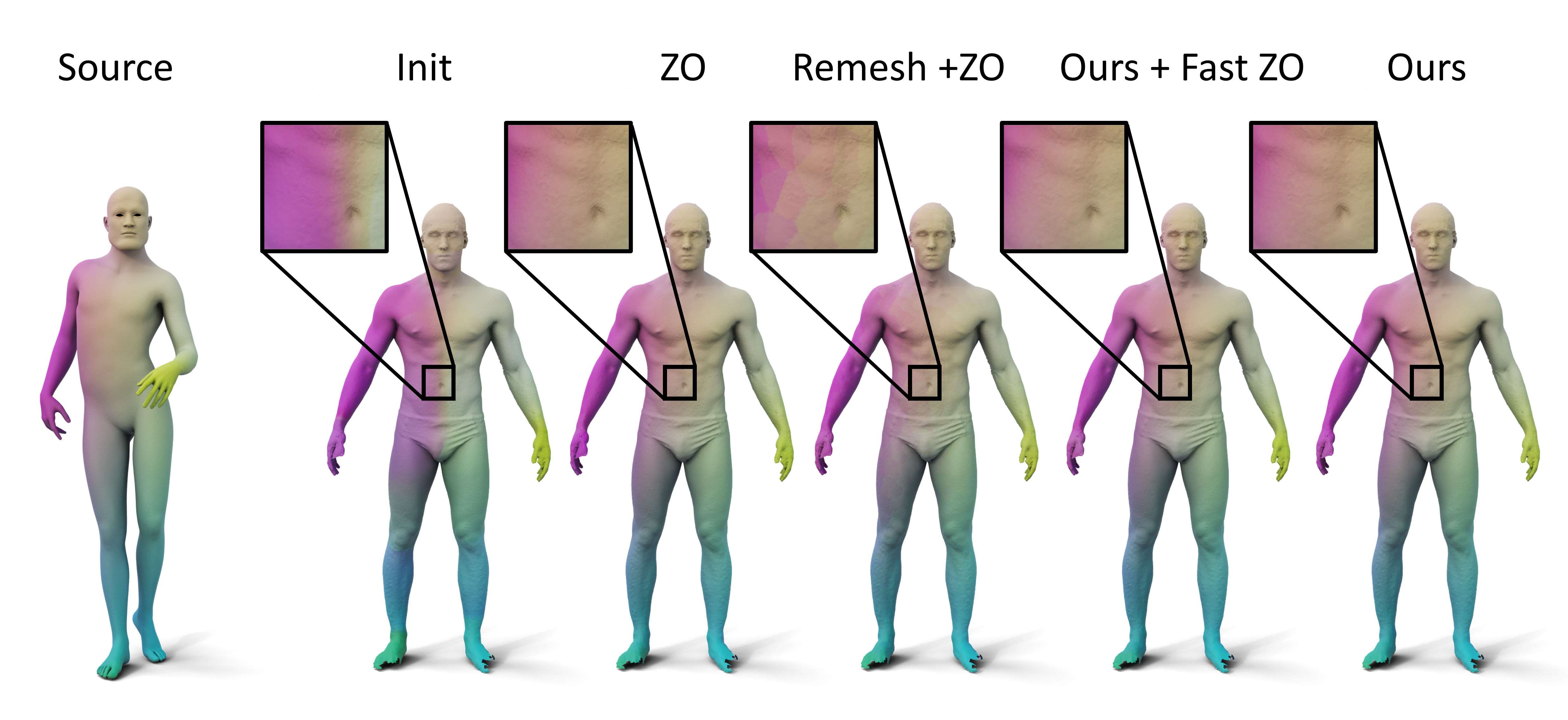}
  \hfill \mbox{}
  \caption{\label{fig:qualitative_shrec}%
Qualitative results on the SHREC19 dataset. Although processing time differ heavily, there is no significant difference between our method and results from ZoomOut. However, remeshing the surface before ZoomOut results in locally constant correspondences.}
\end{figure*}
\section{Results}
\label{sec:results}
In this section we evaluate our method, while focusing on two aspects. Firstly we verify that our method outperforms existing approaches in terms of speed at all steps of the pipeline - that is pre-processing as well as the ZoomOut algorithm. Secondly we show this gain in speed comes at a minimal cost in terms of quantitative metrics. In particular we verify that although our pipeline relies on sparse samples, we eventually obtain clear sub-sample accuracy in the correspondences.

\begin{table}[]
\centering
\caption{Timing in seconds for different methods when processing a pair with $50$K and $200$K vertices and applying ZoomOut from spectral size $20$ to $100$}
\label{tab:timing SHREC19}
\footnotesize
{\def\arraystretch{1}\tabcolsep=0.3em
\begin{tabular}{ccccc|c}\toprule[0.8pt]
methods & \bfseries\itshape Preprocess & \bfseries\itshape LBO & \bfseries\itshape ZoomOut & \bfseries\itshape Conversion  &\bfseries\itshape Total (s)\\ \midrule[0.8pt]
ZO & $1$ & $132$ & $410$ & $83$ & $626$\\
 Fast ZO & $10$ & $132$ & $1$ & $44$ & $187$\\
 R + ZO & $14$ & $2$ & $3$ & $1$ & $21$\\ \midrule[0.8pt]
Ours & $10$ & $7$ & $5$ & $44$ & $65$\\ \bottomrule[0.8pt]
\end{tabular}}
\end{table}

\subsection{Timings}
The method introduced in~\cite{nasikunFastApproximationLaplaceBeltrami2018} aimed at approximating the spectrum of the Laplace Beltrami Operator efficiently. As mentioned above, the additional building blocks we introduced in Section~\ref{sec:new eigenbasis} come at a nearly negligible computational cost as the main bottleneck lies in local geodesic distances computations, which are not recomputed.
The main benefit of our method appears when considering the processing time of the ZoomOut algorithm. Indeed since our algorithm does not involve any $n$-dimensional matrices, its running time becomes entirely agnostic to the original number of vertices. Only the final conversion step, which converts the refined functional map into a dense point-wise map, scales with the number of vertices. Table~\ref{tab:timing SHREC19} displays an example of timings when applying the ZoomOut algorithm between two meshes with respectively $50$ and $200$ thousands vertices. We compare the standard ZoomOut algorithm (ZO), the Fast ZoomOut algorithm (Fast ZO), the standard ZoomOut applied to remeshed versions of the shapes with nearest neighbor extrapolation (R+ZO) and our complete pipeline with $p=3000$ samples on each shape. Notice that farthest point sampling used in Fast ZoomOut can become quite slow on dense meshes compared to Poisson sampling, which explains the similar preprocessing timings between our method and Fast ZoomOut.



\subsection{Evaluation}

\paragraph*{Dataset} As most shape matching methods scale poorly with the number of vertices, there are few benchmarks with dense meshes and ground truth correspondences for evaluation.
The SHREC19 dataset \cite{melzi2019shrec} consists of $430$ pairs of human shapes with different connectivity, all of which come with initial correspondences. Meshes in this dataset have on average $38\ 000$ vertices, with the smallest and largest number of vertices having respectively $4700$ and $200\ 000$ vertices. Due to the limitations of existing shape matching methods, a remeshed version of this dataset is commonly used. In contrast, we display results on the \textit{complete dense dataset}, and show that our method obtains similar results as ZoomOut \cite{melziZoomOutSpectralUpsampling2019} in only a fraction of the required time.

\paragraph*{Metrics} We evaluate different methods using standard metrics~\cite{renContinuousOrientationpreservingCorrespondences2019} for dense shape correspondence, that is accuracy, coverage and smoothness. The accuracy of a computed dense map $T:\Nn\to\Mm$ gives the average geodesic distance between $T(x)$ and $T^*(x)$ for all $x\in\Nn$ where $T^*$ denotes the ground truth map. Note that since maps on SHREC19 are only evaluated on a small subset of $6890$ points this metric only captures partial information, and locally constant maps can still achieve high accuracy.
Coverage and smoothness metrics provide additional information on the quality of correspondences and are sensitive to locally constant correspondences. Coverage is defined as the ratio of area covered by the pointwise map, and smoothness is the Dirichlet energy defined as the squared $L^2$ norm of the gradient of the transferred coordinates.


\begin{table}[]
\centering
\caption{Evaluation of different methods on the complete SHREC19 dataset. Blue highlights the best two methods.}
\label{tab:quantitative evaluation SHREC}
{\def\arraystretch{1}\tabcolsep=0.3em
\begin{tabular}{cccc}\toprule[0.8pt]
methods & \bfseries\itshape Accuracy & \bfseries\itshape Coverage & \bfseries\itshape Smoothness\\ \midrule[0.8pt]
    Init & $60.18$ & $26.5\ \%$ & $9.5$\\ \midrule[0.5pt]
    GT & $-$ & $33.0\ \%$ & $10.43$\\
    ZO & \cellcolor{tabblue!20}$\mathbf{26.84}$ & \cellcolor{tabblue!20}$\mathbf{61.5}\ \%$ & \cellcolor{tabblue!20}$6.2$\\
    R + ZO & $28.57$ & $18.0\ \%$ & $15.0$\\
    Ours w/o radius & $71.35$ & $29\ \%$ & 52.2 \\
    Ours + Fast ZO & $29.5$ & \cellcolor{tabblue!20}$59.7$ \% & $6.4$\\ \midrule[0.8pt]
    Ours & \cellcolor{tabblue!20}$27.78$ & $56.7\ \%$ &\cellcolor{tabblue!20} $\mathbf{5.6}$\\ \bottomrule[0.8pt]
\end{tabular}}
\end{table}

\paragraph*{ZoomOut} We compare our method (Ours) using $3000$ samples first to the same algorithm without adaptive radius (Ours w/o radius), to the standard ZoomOut~\cite{melziZoomOutSpectralUpsampling2019} algorithm applied on the dense meshes (ZO) and on remeshed versions with $3000$ vertices (R+ZO). 
We don't compare to other standard shape matching baseline~\cite{eisenbergerSmoothShellsMultiScale2020, ezuzReversibleHarmonicMaps2019} first since we only wish to approximate results from ZoomOut, but also because these baselines don't scale to high number of vertices.
Additionally, despite the lack of theoretical guarantees, we evaluate a new version of Fast ZoomOut which uses functional map approximation~\eqref{eq:fast ZoomOut approx} on the approximated functional space $\overline{\Ff}$ introduced in section~\ref{sec:new FM} (Ours + Fast ZO). 
Table~\ref{tab:quantitative evaluation SHREC} shows the values of the evaluation metrics on the SHREC19 dataset where the accuracy curves can be found on Figure~\ref{fig:SHREC accuracy curves}, and Figure~\ref{fig:qualitative_shrec} shows an example of a map computed on two dense meshes. 
We see that all methods but R+ZO produce similar metrics although processing times vary significantly. In contrast, the fastest method R+ZO produces locally constant maps as seen on Figure~\ref{fig:qualitative_shrec}, which results in poor coverage and smoothness metrics.
While our results are similar to ZoomOut and Fast ZoomOut, we stress that our results were obtained at a fraction of the processing time of ZoomOut, and come with theoretical upper bounds and control parameters on approximations which Fast ZoomOut does not have.

\begin{figure}
    \centering
    \includegraphics[width=.7\linewidth]{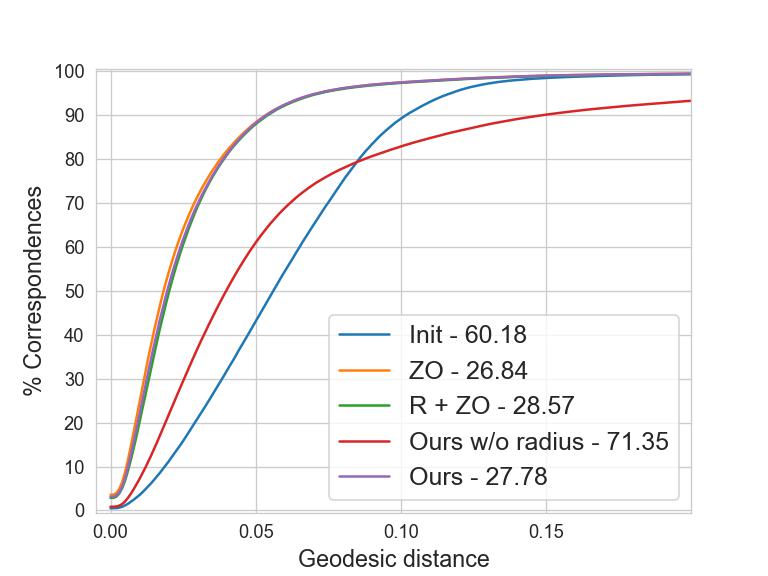}
    \caption{Accuracy curves for different methods presented in Table~\ref{tab:quantitative evaluation SHREC}. Numbers in the legend provide the average geodesic error ($\times10^3$).}
    \label{fig:SHREC accuracy curves}
\end{figure}

\paragraph*{Sub-sample accuracy} One Figure~\ref{fig:sub-sample accuracy}, we provide a result using texture transfer after applying our scalable ZoomOut on a pair of real scans of humerus bones obtained using a CT scanner~\cite{rustamovMapbasedExplorationIntrinsic2013}.
This Figure shows how our algorithm obtain sub-sample accuracy, as the transferred texture remains smooth even though samples are quite sparse on each shape. We display similar results using texture transfer on the SHREC19 dataset on Figure~\ref{fig:texture shrec} and in Appendix~\ref{app:texture transfer}, which provides further details on the shapes.

\paragraph*{Adaptive Radius} While results on Table~\ref{tab:quantitative evaluation SHREC} highlight the efficiency of the adaptive radius scheme, we additionally evaluate how this heuristic allows to improve the estimation $\Delta = \|\overline{\*C} -\widehat{\*C}\|$ presented in section~\ref{sec:contributions}. For this we simply compute $\overline{\*C}$ and $\widehat{\*C}$ with $K=20$ for all initial maps of the SHREC19 dataset, and evaluate the norms of the estimation errors $\Delta$ which we provide in Table~\ref{tab:quantitative evaluation radius}. In this experiment we notice our method improves the baseline by two orders of magnitude.

\begin{table}[h]
\centering
\caption{Norm of the estimation error $\Delta$ with and without adaptive radius on the SHREC19 dataset}
\label{tab:quantitative evaluation radius}
\footnotesize
{\def\arraystretch{1}\tabcolsep=0.3em
\begin{tabular}{ccc}\toprule[0.8pt]
& \bfseries\itshape w/o radius & \bfseries\itshape Ours \\ \midrule[0.8pt]
$\Delta$ ($\times10$) & $1.486$ & $0.018$ \\
\bottomrule[0.8pt]
\end{tabular}}
\end{table}



\section{Conclusion, Limitations and Future Work}

In this paper we introduced a new scalable approach for computing correspondences between non-rigid shapes, represented as possibly very dense meshes. Our method is based on the efficient approach for estimating the Laplace-Beltrami eigenbasis \cite{nasikunFastApproximationLaplaceBeltrami2018} using optimization of coefficients of local extension functions built from a sparse set of samples. Key to our approach is careful analysis of the relation between functional spaces on the samples and those on the original dense shapes. For this, we extend this approach proposed in \cite{nasikunFastApproximationLaplaceBeltrami2018} and demonstrate how better behaved local functions can be obtained with very little additional effort. We use this construction to define a functional map approximation that only relies on information stored at the samples, and provide theoretical guarantees for this construction. Finally, we use these insights to propose a scalable variant of the ZoomOut algorithm \cite{melziZoomOutSpectralUpsampling2019}, which allows to compute high-quality functional and point-to-point maps between very dense meshes at the fraction of the cost of the standard approach.

\begin{figure}[htb]
  \centering
  \includegraphics[width=.9\linewidth]{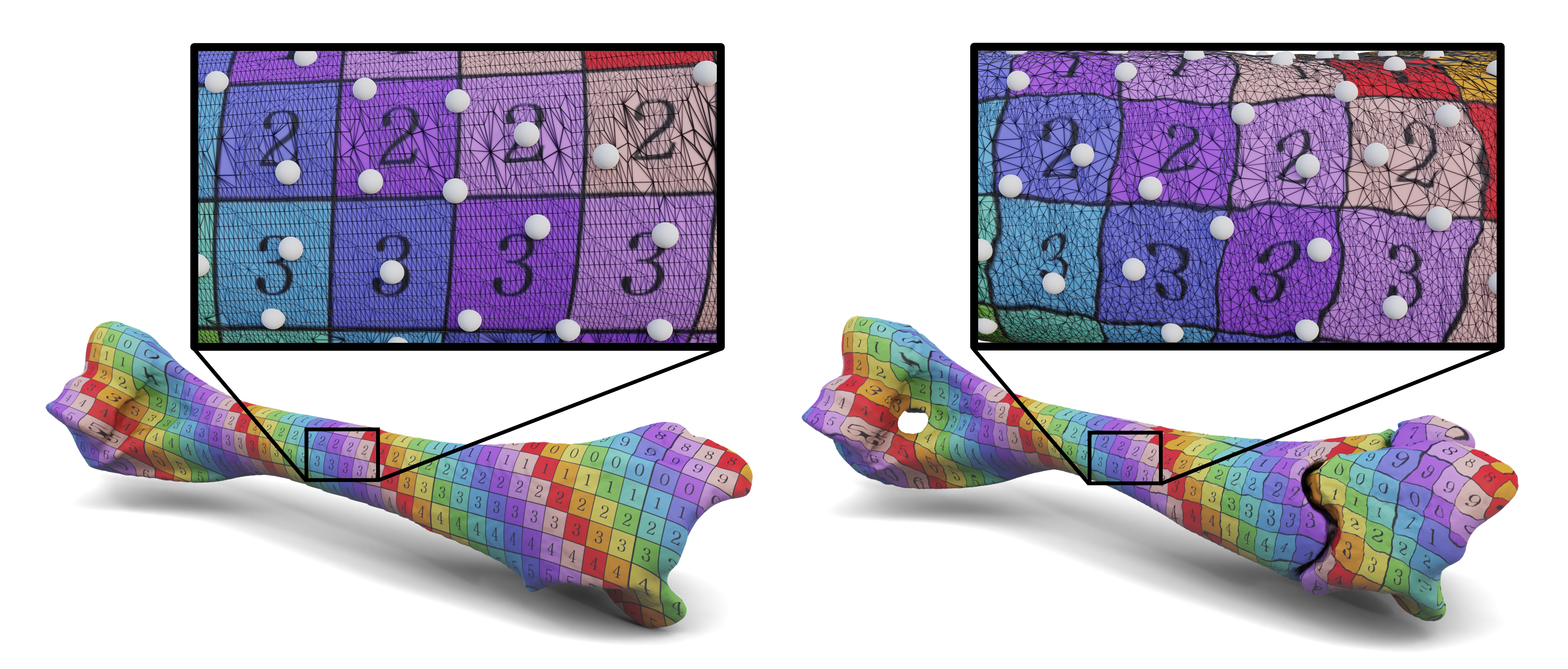}

  %
  %
  \caption{\label{fig:sub-sample accuracy}
           Texture transfer using our scalable version of ZoomOut. Samples used in the pipeline are shown as white dots.}
\end{figure}

Although our method achieves high-quality results, it still has several limitations. First, it relies heavily on the mesh structure, and is not directly applicable to other representations, such as point clouds. Second, our method depends on a critical hyperparameter, which is the number of samples. We have observed that 3000 samples perform well on a very wide range of settings, but it would be interesting to investigate the optimal number, depending on the size of the spectral basis. Furthermore, we use Poisson sampling as advocated in \cite{nasikunFastApproximationLaplaceBeltrami2018}, which gives good results in practice. However, the optimal choice of the sampling procedure, depending on the geometric properties of shapes under consideration, would be an equally interesting venue for investigation. Lastly, an out-of-core implementation, capable of handling meshes with 10’s of millions to billions of vertices, while possible in principle, would be an excellent practical future extension of our approach.

\begin{figure}[htb]
  \centering
  \includegraphics[width=.9\linewidth]{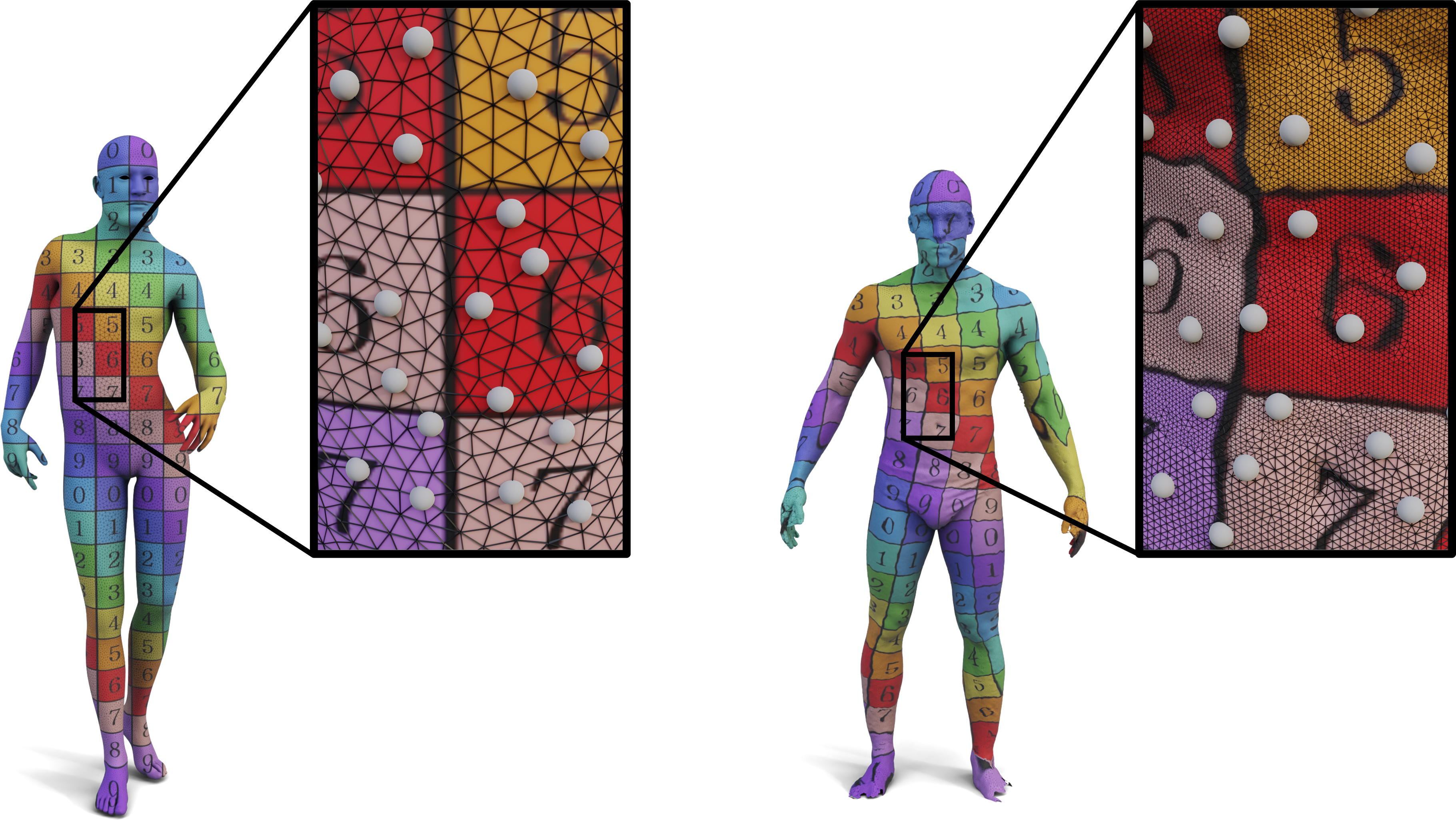}
  \caption{\label{fig:texture shrec}
           Texture transfer using our scalable version of ZoomOut on a pair of the SHREC19 dataset. Samples used in the pipeline are shown as white dots.}
\end{figure}

\paragraph*{Acknowledgments} The authors thank the anonymous reviewers for their valuable comments and suggestions. Parts of this work were supported by the ERC Starting Grant No. 758800 (EXPROTEA) and the ANR AI Chair AIGRETTE.

\bibliographystyle{eg-alpha-doi} 
\bibliography{biblio}       

\appendix
\section{Function $\chi$}
\label{app:chi function}

The function $\chi:\RR_+\to[0,1]$, differentiable with $\chi(0)=1$ and $\chi(x)=0$ for $x\leq 1$ can be defined two ways. Following~\cite{nasikunFastApproximationLaplaceBeltrami2018}, we use the polynomial interpolation function $\chi:x\mapsto 1 - 3x^2+ 2x^3$.
Another possibility is to use the standard $C^\infty$ compactly supported function $\chi : x\mapsto \exp\inpar{1 - \frac{1}{1 - x^2}}$, which we found not as good as the polynomial interpolation regarding results. Both functions are displayed on Figure~\ref{fig:chi func}.

\begin{figure}[htb]
    \centering
    \includegraphics[width=.4\textwidth]{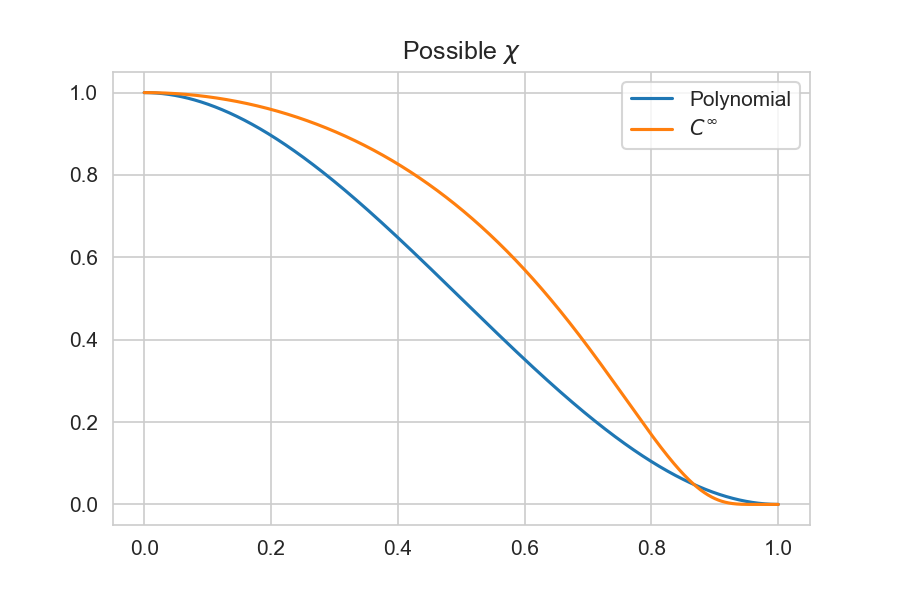}
    \caption{Possible choices for function $\chi$}
    \label{fig:chi func}
\end{figure}

\section{Coefficient weighting}
\label{app:coefficient correction}
Table~\ref{tab:coef reweight eval} compares our algorithm (Ours) with a similar one (Ours + reweight), where we replace $\overline{\bm{\Phi}}^\Mm$ by $\overline{\bm{\Psi}}^\Mm$ in Eq.~\eqref{eq:reduced FM} so that the map $\overline{\bm{\Pi}}$ actually transports pointwise values rather than coefficient, as mentioned in section~\ref{sec:new FM}. This amounts to reweigthing local coefficients to actually become function values.
We notice this method does not improve our pipeline on the SHREC19 dataset.

\begin{table}[]
\centering
\caption{Evaluation of the reweighting scheme on the SHREC19 dataset.}
\label{tab:coef reweight eval}
{\def\arraystretch{1}\tabcolsep=0.3em
\begin{tabular}{cccc}\toprule[0.8pt]
methods & \bfseries\itshape Accuracy & \bfseries\itshape Coverage & \bfseries\itshape Smoothness\\ \midrule[0.8pt]
    Init & $60.18$ & $26.5\ \%$ & $9.5$\\ \midrule[0.5pt]
    Ours + Reweight & $28.1$ & $54.6\ \%$ & $6.3$ \\
    Ours & $27.78$ & $56.7\ \%$ & $5.6$\\ \bottomrule[0.8pt]
\end{tabular}}
\end{table}

\section{Proof of Proposition~\ref{prop:Approx FM vs GT FM}}
\label{app:proof:Approx FM vs GT FM}








\begin{proof}
We first note that the entries of the functional maps $\*C$ and $\overline{\*C}$ can be written
\begin{align}
    \*C_{i,j} &= \langle \psi^\Nn_j, \psi_i^\Mm \circ T\rangle_\Nn\\
    \overline{\*C}_{i,j}& = \langle \overline{\psi}^\Nn_j, \overline{\psi}_i^\Mm \circ T\rangle_\Nn
\end{align}

Furthermore, given $f_1$, $g_1$, $f_2$, $g_2$ functions on $\Nn$.

Then for any $x\in\Nn$
\begin{equation}
\begin{split}
    f_1(x)g_1(x) - f_2(x)g_2(x) = &f_1(x)\inpar{g_1(x)-g_2(x)} \\
    &+ g_2(x)\inpar{f_1(x)-f_2(x)}
\end{split}
\end{equation}

With $f_1 = \psi^\Nn_i$, $g_1 = \psi^\Mm_j\circ T$, $f_2 =\overline{\psi}^\Nn_j$ and $g_2 = \overline{\psi}_i^\Mm \circ T$, we have by hypothesis
\begin{equation}
\begin{split}
    \left\|f_2 - f_1\right\|_\infty \ \leq \varepsilon \\ 
    \left\|g_2 - g_1\right\|_\infty \ \leq \varepsilon \\
    \left\|g_2\right\|_\Nn \leq B_T
\end{split}
\end{equation}

Therefore 
\begin{align*}
    \left| \langle f_1, g_1 \rangle_\Nn - \langle f_2, g_2 \rangle_\Nn \right|^2 &=
    \left|\int_\Nn \left(f_1(x)g_1(x) - f_2(x)g_2(x)\right) d\mu^\Nn(x)\right|^2 \\
    &\leq \int_\Nn f_1(x)^2\inpar{g_1(x)-g_2(x)}^2 d\mu^\Nn(x) \\ &+ \int_\Nn g_2(x)^2\inpar{f_1(x)-f_2(x)}^2 d\mu^\Nn(x) \\
    &\leq \varepsilon \int_\Nn f_1(x)^2d\mu^\Nn(x) + \varepsilon \int_\Nn g_2(x)^2d\mu^\Nn(x) \\
    &\leq \varepsilon\left(\|f_1\|_\Nn^2 + \|g_2\|_\Nn^2 \right)\\
    &\leq \varepsilon^2\left(1+B_T^2\right)
\end{align*}

Summing for all elements of the matrix $\*C$ gives the result.
\end{proof}

\section{Proof of Proposition~\ref{prop:Approximation Error}}
\label{app:proof:Approximation Error}
This proof relies on the following proposition, 
\begin{proposition}\label{prop:interpolation error}Given, $\Mm$ a surface, $\big(v_j\big)_j$ and $\big(u_j\big)_j$ built as described in Sec~\ref{sec:background:approx}.\\
Given $f:\Mm\to\RR$, suppose there exists $\eps>0$ so that for any $x,y\in\Mm$, 
$d(x,y)\leq\rho \implies |f(x) - f(y)|\leq \eps$.

Then the interpolation error between $f$ and $\tilde{f} = \sum_j f(v_j)u_j$ is bounded by $\eps$:
\begin{equation}\label{eq:interpolation error}
    |\tilde{f}(x) - f(x)| \leq \eps \quad \forall x \in \Mm
\end{equation}

And for any $j\in\{1,\dots,p\}$
\begin{equation}\label{eq:interpolation error on sample}
    |\tilde{f}(v_j) - f(v_j)| \leq \eps \inpar{1 - u_j(v_j)}
\end{equation}
\end{proposition}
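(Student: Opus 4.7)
The plan is to exploit two structural properties of the local functions $(u_j)_j$: first, the partition of unity $\sum_j u_j(x) = 1$ coming from the normalization in Eq.~\eqref{eq:normalization of u}, and second, the compact support, namely $u_j(x) \neq 0$ only if $d^\Mm(x, v_j) < \rho$, which is inherited from $\tilde{u}_j$ in Eq.~\eqref{eq:tilde u def}. These are the only two ingredients needed, and both pieces of the statement will follow from the same algebraic manipulation.

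For the first bound in Eq.~\eqref{eq:interpolation error}, I would start by rewriting the error as
\begin{equation*}
    \tilde{f}(x) - f(x) \;=\; \sum_j f(v_j) u_j(x) - f(x) \sum_j u_j(x) \;=\; \sum_j \big(f(v_j) - f(x)\big)\,u_j(x),
\end{equation*}
using the partition of unity in the second equality. Then I would take absolute values and observe that each nonzero term in the sum satisfies $d^\Mm(x, v_j) < \rho$, so the hypothesis on $f$ gives $|f(v_j) - f(x)| \leq \eps$. The bound $\eps \sum_j u_j(x) = \eps$ follows immediately.

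For the sharper bound at sample points in Eq.~\eqref{eq:interpolation error on sample}, I would apply the same decomposition at $x = v_j$. The key extra observation is that the $k = j$ term in the sum vanishes since $f(v_j) - f(v_j) = 0$, and therefore
\begin{equation*}
    |\tilde{f}(v_j) - f(v_j)| \;\leq\; \sum_{k \neq j} |f(v_k) - f(v_j)|\,u_k(v_j) \;\leq\; \eps \sum_{k \neq j} u_k(v_j) \;=\; \eps\,(1 - u_j(v_j)),
\end{equation*}
where the last equality again uses the partition of unity. I do not anticipate any real obstacle here: the proof is essentially a one-line computation once one recognizes the partition-of-unity trick, and the $(1 - u_j(v_j))$ factor appears naturally from removing the diagonal term. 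The only subtlety worth emphasizing in the write-up is that the compact support of $u_j$ must be combined with the modulus-of-continuity hypothesis on $f$ restricted to the geodesic ball of radius $\rho$, which is precisely the quantitative assumption of the proposition.
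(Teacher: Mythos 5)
Your proof is correct and follows essentially the same route as the paper's: the partition of unity to write $\tilde{f}(x)-f(x)=\sum_j\big(f(v_j)-f(x)\big)u_j(x)$, the compact support of $u_j$ to invoke the hypothesis on $f$, and dropping the diagonal term at sample points. The only difference is that the paper bounds the \emph{squared} error (using $u_j^2\leq u_j$), which at a sample point technically yields $\eps\sqrt{1-u_j(v_j)}$, whereas your direct triangle-inequality argument gives exactly the stated bound $\eps\,(1-u_j(v_j))$ — so yours is, if anything, the cleaner and tighter write-up.
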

\begin{proof} \textit{of Prop.~\ref{prop:interpolation error}}

Let $f:\Mm\to\RR$, $\Ss=\left\{v_1,\dots,v_p\right\}$ a sample of $\Mm$ associated to a radius $\rho$.
Since $\inpar{u_j}$ verify $\sum_j u_j=\mathbb{1}$, for any $x\in\Mm$ we have
\begin{equation}
    f(x) = \sum_{j=1}^p u_j(x) f(x)
\end{equation}

Therefore if $\tilde{f}=\sum_j f(v_j)u_j$
\begin{equation}
\begin{split}
    f(x) - \tilde{f}(x) &= \sum_{j=1}^p u_j(x) (f(x) - f(v_j)) \\
    &=  \sum_{j,\ d(v_j,x) < \rho} u_j(x) (f(x) - f(v_j))
\end{split}
\end{equation}

This gives, using triangular inequality and $u_j(x)^2\leq u_j(x)$ (since $0\leq u_j(x)\leq 1$):

\begin{equation}
\begin{split}
    |f(x) - \tilde{f}(x)|^2 &\leq \sum_{j,\ d(v_j,x) < \rho} u_j(x)^2 |f(x) - f(v_j)|^2\\
    &\leq \sum_{j,\ d(v_j,x) < \rho} u_j(x) |f(x) - f(v_j)|^2\\
\end{split}
\end{equation}

Which gives $ |f(x) - \tilde{f}(x)|^2\leq \epsilon$ using the hypothesis of the proposition and the fact $\sum_j u_j = \mathbb{1}$.

Furthermore, if there exit $k$ so that $x=v_k$, we can remove the term of index $k$ and we have
\begin{equation}
\begin{split}
    |f(x) - \tilde{f}(x)|^2 &\leq \sum_{j\neq k,\ d(v_j,x) < \rho} u_j(v_k)^2 |f(v_k) - f(v_j)|^2\\
    &\leq \eps^2 \sum_{j\neq k,\ d(v_j,x) < \rho} u_j(v_k)\\
    &\leq \eps^2 (1 - u_k(v_k))
\end{split}
\end{equation}

\end{proof}

\begin{proof} \textit{of Prop.~\ref{prop:Approximation Error}}

We again suppose all shapes to be area-normalized. 
Using the $\tilde{f}$ notation from proposition~\ref{prop:interpolation error}, we can use the triangular inequality on $\left\|\bm{\Pi}\overline{\bm{\Psi}}^\Mm - \*U^\Nn\overline{\bm{\Pi}}\ \overline{\bm{\Phi}}^\Mm \right\|_\Nn$:

\begin{equation}
\begin{split}
    \left\|\bm{\Pi}\overline{\bm{\Psi}}^\Mm - \*U^\Nn\overline{\bm{\Pi}}\ \overline{\bm{\Phi}}^\Mm \right\|_\Nn^2 \leq &\left\|\bm{\Pi}\overline{\bm{\Psi}}^\Mm - \widetilde{\bm{\Pi}\overline{\bm{\Psi}}^\Mm} \right\|_\Nn^2 \\
    + &\left\|\widetilde{\bm{\Pi}\overline{\bm{\Psi}}^\Mm} - \*U^\Nn\overline{\bm{\Pi}}\ \overline{\bm{\Phi}}^\Mm \right\|_\Nn^2
    \end{split}
\end{equation}

The first term can be decomposed as a sum of the norms of its $K$ columns, where each term is in the form $\|\overline{\Psi}^\Mm_j\circ T - \widetilde{\overline{\Psi}^\Mm_j\circ T}\|^2_\Nn$, and can be controlled by applying the bound on interpolation error from proposition~\ref{prop:interpolation error} associated with the bounded distortion lemma, that is 
\begin{equation}\label{eq:proof:bound1}
    \left\|\bm{\Pi}\overline{\bm{\Psi}}^\Mm - \widetilde{\bm{\Pi}\overline{\bm{\Psi}}^\Mm} \right\|_\Nn^2 \leq K B_T^2\epsilon^2
\end{equation}

Focusing on the second term, the following lemma will be very useful in order to bound it :
\begin{lemma}\label{lemma:U norm}
Given $\beta \in\RR^{p^\Nn}$, $\|\*U^\Nn\beta\|_\Nn^2 \leq \|\beta\|_F^2$
\end{lemma}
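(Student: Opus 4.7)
The plan is to expand the squared $L^2$ norm on $\Nn$ in terms of the diagonal area matrix $\*A^\Nn$ and the vertex-wise values of the partition of unity functions $u_j$, then exploit the two key structural properties established in Sec.~\ref{sec:background:approx}: (i) $0 \leq u_j(x) \leq 1$ for every vertex $x$, and (ii) $\sum_j u_j(x) = 1$ (partition of unity). Together these say that, for each fixed $x$, the weights $\{u_j(x)\}_j$ form a probability distribution over the sample indices, which is exactly the setup for Jensen's inequality.

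Concretely, I would first write
\begin{equation*}
  \|\*U^\Nn \beta\|_\Nn^2 \;=\; \sum_{x \in \Nn} \*A^\Nn_{xx} \Bigl(\sum_j u_j(x) \beta_j\Bigr)^2,
\end{equation*}
using that $\*A^\Nn$ is diagonal. Then I would apply Jensen's inequality (equivalently Cauchy--Schwarz with weights $u_j(x)$) to the inner squared sum, using properties (i)--(ii), to obtain
\begin{equation*}
  \Bigl(\sum_j u_j(x) \beta_j\Bigr)^2 \;\leq\; \sum_j u_j(x) \beta_j^2 \qquad \forall\,x \in \Nn.
\end{equation*}

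Substituting this bound and swapping the order of summation gives $\|\*U^\Nn\beta\|_\Nn^2 \leq \sum_j \beta_j^2 \bigl(\sum_x \*A^\Nn_{xx} u_j(x)\bigr)$, so the result reduces to showing that $\sum_x \*A^\Nn_{xx} u_j(x) \leq 1$ for each $j$. Since $\sum_x \*A^\Nn_{xx} u_j(x)$ is precisely the discrete integral $\int_\Nn u_j \,d\mu^\Nn$, and $0 \leq u_j \leq 1$ on a shape normalized so that $\int_\Nn 1\,d\mu^\Nn = 1$ (as assumed throughout the error analysis, cf.~the opening line of the proof of Prop.~\ref{prop:Approximation Error}), this integral is bounded by $1$, yielding $\|\*U^\Nn\beta\|_\Nn^2 \leq \sum_j \beta_j^2 = \|\beta\|_F^2$.

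The estimate is essentially tight and the argument is short; there is no real obstacle, but the one subtlety to state explicitly is the area normalization assumption on $\Nn$, which is what makes the right-hand side $\|\beta\|_F^2$ rather than $(\mathrm{Area}(\Nn))\|\beta\|_F^2$. Without this normalization, the lemma would pick up an overall area factor, which would then have to be propagated through Eq.~\eqref{eq:proof:bound1} and the remainder of the proof of Prop.~\ref{prop:Approximation Error}; keeping the area normalized to $1$ from the outset avoids carrying this constant.
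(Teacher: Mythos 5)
Your proof is correct. Note that the paper never actually proves Lemma~\ref{lemma:U norm} --- it is stated inside the proof of Proposition~\ref{prop:Approximation Error} and used without justification --- so your argument supplies the missing step rather than diverging from an existing one, and it is the natural argument: expand $\|\*U^\Nn\beta\|_\Nn^2$ with the diagonal mass matrix, apply Jensen's inequality with the weights $u_j(x)$ (using $u_j\geq 0$ and $\sum_j u_j = \mathbb{1}$), and bound $\sum_x \*A^\Nn_{xx}u_j(x)=\int_\Nn u_j\,d\mu^\Nn\leq 1$ using $0\leq u_j\leq 1$ and unit total area. Your explicit flagging of the area-normalization assumption is also consistent with the paper, whose proof of Proposition~\ref{prop:Approximation Error} opens by assuming all shapes are area-normalized; without it the lemma would indeed carry an $\mathrm{Area}(\Nn)$ factor. (The only quibble is the aside that the estimate is ``essentially tight'': for generic $\beta$, e.g.\ the all-ones vector, the inequality is far from tight, but this has no bearing on correctness.)
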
 

We indeed notice the second term can be written in the form $\|\*U^\Nn \*A - \*U^\Nn \*B\|_\Nn^2$. Using lemma~\ref{lemma:U norm}, we can now focus on bounding $\|A-B\|_F^2$ and especially on the squared norm of each of the columns of $A-B$. In practice, each column can be written as $\inpar{\overline{\psi}_j^\Mm\circ T(v_k) - \overline{\phi}_j^\Mm\circ T_{\vert\Ss^\Nn}(v_k)}_k$, since we supposed that $T_{\vert\Ss^\Nn}$ was well defined between the subsamples.
\\

Given $k\in\{1,\dots, p^\Nn\}$, there exists $i_0\in\{1,\dots,p^\Mm\}$ so that $T(v_i^\Nn) = v_{i_0}^\Mm$.\\
Furthermore, by definition of the approximated eigenvectors $\overline{\psi}^\Mm_j$, for all $x\in\Mm$ we have $\overline{\psi}^\Mm_j(x) = \sum_{k=1}^{p^\Mm} \overline{\phi}^\Mm_j(v_k^\Mm) u_k^\Mm(x)$\\
Therefore, denoting $\Delta_j(i) = \overline{\psi}^\Mm_j(v_{i_0}^\Mm) -\overline{\phi}_j^\Mm(v_{i_0}^\Mm)$

\begin{align}
    \Delta_j(i) &=
    \sum_{k=1}^{p^\Mm} \overline{\phi}^\Mm_j(v_k^\Mm) u_k^\Mm(v_{i_0}^\Mm) - \overline{\phi}^\Mm_j(v_{i_0}^\Mm) \\
    &= \sum_{k=1}^{p^\Mm}u_k^\Mm(v_{i_0}^\Mm)\inpar{\overline{\phi}^\Mm_j(v_k^\Mm) - \overline{\phi}^\Mm_j(v_{i_0}^\Mm)}
\end{align}

The exact same procedure as in the proof of Proposition~\ref{prop:interpolation error} can now be applied which allows to bound the term
\begin{equation}\label{eq:proof:bound2}
    \left\|\widetilde{\bm{\Pi}\overline{\bm{\Psi}}^\Mm} - \*U^\Nn\overline{\bm{\Pi}}\ \overline{\bm{\Phi}}^\Mm \right\|_\Nn^2 \leq K\eps (1-\al)
\end{equation}

Summing terms from~\eqref{eq:proof:bound1} and~\eqref{eq:proof:bound2} produce the upper bound of proposition~\ref{prop:Approximation Error}.



\end{proof}

\section{Values of theoretical quantities}
\label{app:parameters values}
We here provide values for the named values from Proposition~\ref{prop:Approximation Error}. We again highlight the proposed bounds are not tight and only serves as guidance to select parameters.

First, note that $B_T$ is a Lipchitz-constant, which is $1$ whenever $T$ is an isometry, and is else related to the area-distortion induced by $T$.

$\alpha$ can then vary between $0$ and $1$, but our adaptive radius scheme ensures the minimal value is $0.3$. In practice, the average value is higher, around $.43$ on average on the SHREC19 dataset.

Finally $\varepsilon$ controls the variation of the approximated eigenvectors in a local neighborhood, and can be set arbitrarily small by decreasing the value for $\rho$ (and potentially increasing the number of samples to ensure partition of unity). Note that the higher the frequency of the eigenvector, the higher the maximal value of $\varepsilon$ is, where the maximum is taken across all local neighborhoods.
In practice, we observe maximum values between $0$ and $8$ for the first $150$ eigenvectors, when using around $1500$ sampled points. In comparison, we obtain values between $0$ and $4$ by comparing values of the \emph{exact} eigenvectors simply across edges.

\section{Functional Map approximation}
\label{app:FM approx}
We here display on Figure~\ref{fig:app:FM approx} images of the ground truth functional maps for Figure~\ref{fig:FM vs reduced FM}

\begin{figure}
    \centering
    \includegraphics[width=.2\textwidth]{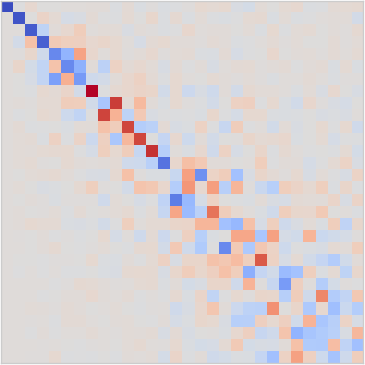}\hspace{.2cm}
    \includegraphics[width=.2\textwidth]{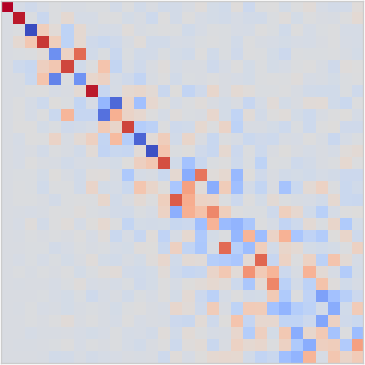}
    \caption{Ground truth functional map $\overline{\*C}$ using the functional space $\overline{\Ff}$ without (Left) and with (Right) adaptive radius. Notice that up to a change of sign, both functional maps look similar}
    \label{fig:app:FM approx}
\end{figure}

\section{Implementation details}
\label{app:implementation}
We here provide additional details on parameters and algorithm for implementation.

Per vertex radii are initially set to the same initial value $\rho_0$, defined as $\rho_0 =3\tilde{\rho}_0$ with $\tilde{\rho}_0=\sqrt{\frac{\text{Area}(\Mm)}{p\pi}}$. The value of $\tilde{\rho}_0$ is obtained by expecting each sample point $v_j$ to occupy a geodesic disk or radius $\tilde{\rho}_0$, which would eventually cover the complete shape - that is $p\pi\rho^2=\text{Area}(\Mm)$. If the choice of the sample is free, we recommend using Poisson Disk Sampling to obtain roughly evenly spaced samples in a fast manner. 

Local Dijkstra starting from samples can be accelerated by both parallelization and reduction of the search space to a euclidean ball of radius $\rho_0$ around each sample as we have $d^\Mm(x_i,x_j)\leq\|x_i-x_j\|_2$. 

If some points $x_i\in\Mm$ have not been reached during this process, one should either increase the initial radius $\rho_0$ or simply add $x_i$ to the sample set $\Ss$ and run an extra local Dijkstra starting from $x_i$.

Values of $\inpar{\tilde{u}_j}_j$ can now be computed and stored in a sparse $n\times p$ matrix $\widetilde{\*U}$ where each column stores a local function. Eventually in order to detect too small self-weights $u_j(v_j)$, we notice from Equation~\eqref{eq:self weight} that $u_j(v_j) \leq \al$ is equivalent to $\sum_{i} \tilde{u}_i(v_j) \geq \frac{1}{\al} $, where the first term is the sum of a row of a $p\times p$ submatrix extracted from $\widetilde{\*U}$. Reducing the radius $\rho_j$ of a sample only consists in recomputing the $j$-th column of $\widetilde{U}$ from the \emph{same} distance values as computed by the first Dijkstra run. This way, no additional Dijkstra is run which leads to a somewhat costless improvement of the local functions.

\section{Texture transfer}
\label{app:texture transfer}
We further show the efficiency of our method in terms of accuracy by displaying another example of texture transfer on a pair of dense meshes part of the SHREC19 dataset, as seen on Figure~\ref{fig:texture shrec2}. Here the leftmost shape contains $50\ 000$ vertices and rightmost $200\ 000$, but we only use nearly $1500$ samples to obtain such correpondences.
Note that in this case, where the number of vertices on the target shape is larger than the number of vertices on the source shape, texture transfer is especially challenging as multiple vertices of the target shape are projected into the same triangles. This makes texture transfer very sensitive to the quality of the estimated map.
We stress this pipeline obtains sub-sample accuracy in the correspondences, all in a fraction of the required time to run the exact ZoomOut pipeline.
We further highlight that texture at the elbows and shoulder is not smooth on the source shape, which explains the distortion on the target shape. This results simply serves as visualization.

\begin{figure}[htb]
  \centering
  \includegraphics[width=.9\linewidth]{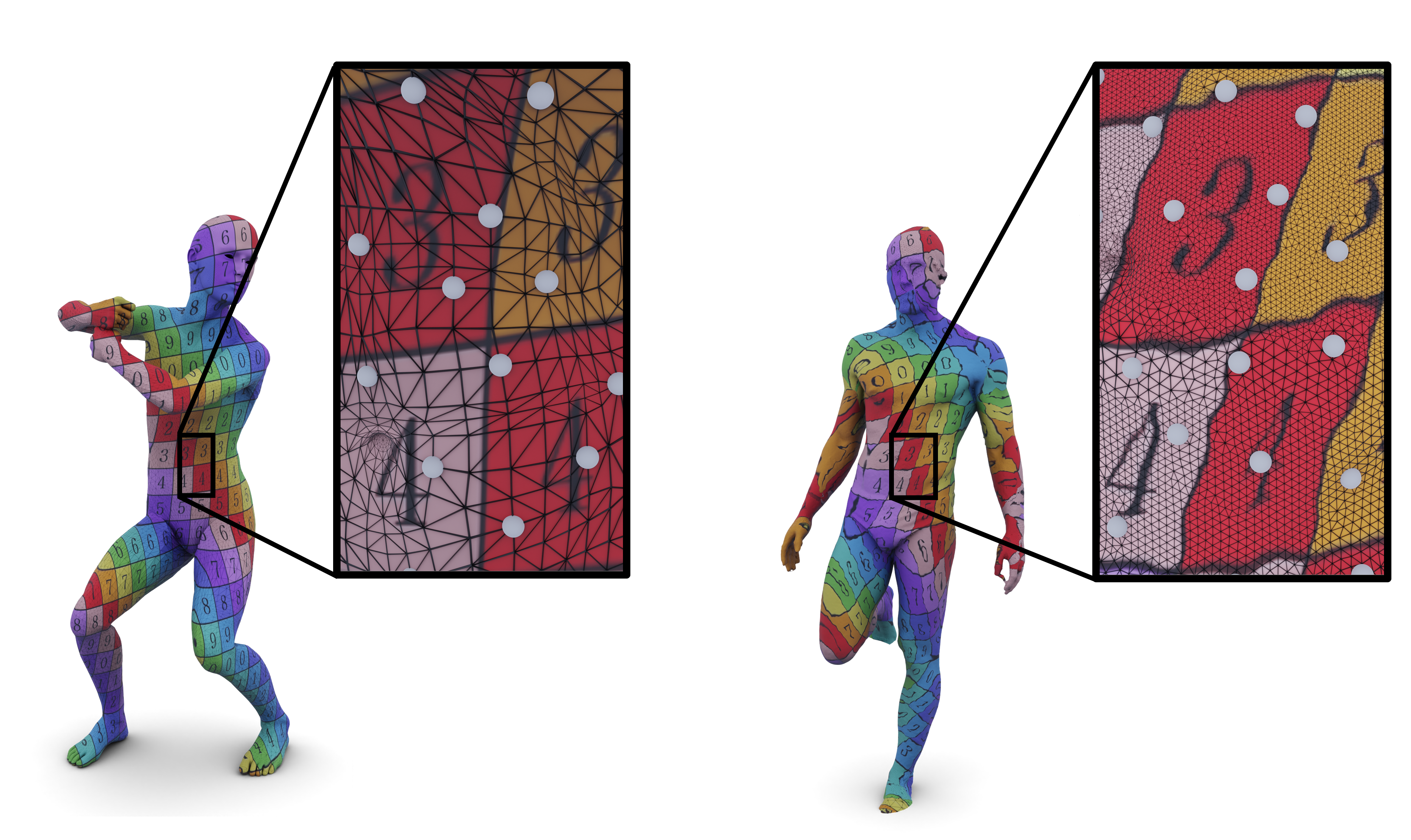}
  \caption{\label{fig:texture shrec2}
           Texture transfer using our scalable version of ZoomOut on a pair of the SHREC19 dataset. Samples used in the pipeline are shown as white dots.}
\end{figure}

\end{document}